\newcommand{\h}{\hspace*{0.2in}}
\newcommand{\quotes}[1]{``$#1$''}
\title{Byzantine Lattice Agreement in Asynchronous Systems} %TODO Please add
\titlerunning{Byzantine Lattice Agreement in Asynchronous Systems}%optional, please use if title is longer than one line
\author{Xiong Zheng}{University of Texas at Austin, Austin, TX 78712, USA}{zhengxiongtym@utexas.edu}{}{}
\author{Vijay K. Garg}{University of Texas at Austin, Austin, TX 78712, USA}{garg@ece.utexas.edu}{}{}
\authorrunning{X.\,Zheng, and V.\,K. Garg} %mandatory. First: Use abbreviated first/middle names. Second (only in severe cases): U
\keywords{Lattice Agreement, Byzantine, Gradecast}%TODO mandatory; please add comma-separated list of keywords
\begin{document}

\maketitle

\begin{abstract}
We study the Byzantine lattice agreement (BLA) problem in asynchronous distributed message passing systems. In the BLA problem, each process proposes a value from a join semi-lattice and needs to output a value also in the lattice such that all output values of correct processes lie on a chain despite the presence of Byzantine processes. We present an algorithm for this problem with round complexity of $O(\log f)$ which tolerates $f < \frac{n}{5}$ Byzantine failures in the asynchronous setting without digital signatures, where $n$ is the number of processes. We also show how this algorithm can be modified to work in the authenticated setting (i.e., with digital signatures) to tolerate $f < \frac{n}{3}$ Byzantine failures. 
\end{abstract}

\section{Introduction}
In distributed systems, reaching agreement in the presence of process failures is a fundamental task. Understanding the kind of agreement that can be reached helps us understand the limitation of distributed systems with failures. Consensus~\cite{pease1980reaching} is the most fundamental problem in distributed computing. In this problem, each process proposes some input value and has to decide on some output value such that all correct processes decide on the same valid output. In synchronous message systems with crash failures, consensus cannot be solved in fewer than $f + 1$ rounds~\cite{dolev1983authenticated}. In asynchronous systems, consensus is impossible in the presence of even one crash failure~\cite{fischer1985impossibility}. The $k$-set agreement~\cite{chaudhuri1993more} is a generalization of consensus, in which processes can decide on at most $k$ values instead of just one single value. The $k$-set agreement cannot be solved in asynchronous systems if the number of crash failures $f \geq k$~\cite{biely2012s, herlihy1998unifying}. The paper~\cite{chaudhuri2000tight} shows that $k$-set agreement problem cannot be solved within $\lfloor \frac{f}{k} \rfloor$ rounds if $n \geq f + k + 1$ in crash failure model. 

%The lattice agreement is the first agreement problem that can be solved in logarithmic rounds, both in synchronous systems and asynchronous systems.
The lattice agreement problem was proposed by Attiya et al~\cite{attiya1995atomic} to solve the atomic snapshot object problem in shared memory systems. In this problem, each process $i \in [n]$ has input $x_i$ and needs to output $y_i$ such that the following properties are satisfied. 

1) {\bf Downward-Vaility}: $x_i \leq y_i$ for each correct process $i$. 

2) {\bf Upward-Validity}: $y_i \leq \sqcup \{x_i ~ | ~ i \in [n]\}$. 

3) {\bf Comparability}: for any two correct processes $i$ and $j$, either $y_i \leq y_j$ or $y_j \leq y_i$. \\

The lattice agreement problem is a weaker problem than the consensus problem and the $k$-set agreement problem. It can be solved in $O(\log f)$ rounds in synchronous systems and tolerate $f < n$ crash failures \cite{zheng2018lattice}. In asynchronous systems, it can also be solved in $O(\log f)$ rounds but can only tolerate $f < \frac{n}{2}$ crash failures \cite{zheng2018linearizable}.

Attiya et al in~\cite{attiya1995atomic} presents a generic algorithm to transform any protocol for the lattice agreement problem to a protocol for implementing an atomic snapshot object in shared memory systems. This transformation can be easily implemented in message passing systems by replacing each read and write step with sending \quotes{read} and \quotes{write} messages to all and waiting for acknowledgements from $n - f$ different processes. On the other hand, if we can implement an atomic snapshot object, lattice agreement can also be solved easily both on shared-memory and message passing systems with only crash failures. Thus, solving the lattice agreement problem in message passing systems is equivalent to implementing an atomic snapshot object in message passing systems with only crash failures. 

Using lattice agreement protocols, Faleiro et al~\cite{faleiro2012generalized} gives procedures to build a special class of linearizable and serializable replicated state machines which only support query operations and update operations but not mixed query-update operations. Later, Xiong et al~\cite{zheng2018linearizable} proposes some optimizations for their procedure for implementing replicated state machines from lattice agreement in practice. They propose a method to truncate the logs maintained in the procedure in~\cite{faleiro2012generalized}. %They also implemented a simple replicated state machine using their optimized protocols and compare it with S-Paxos \cite{biely2012s}, which is a java implementation of Paxos \cite{lamport1998part, lamport2001paxos}, and demonstrate better throughput and latency in some settings. 
The recent paper \cite{skrzypczak2019linearizable} by Skrzypczak et al proposes a protocol based on generalized lattice agreement, which is a multi-shot version of lattice agreement problem, to provide  linearizability for state based conflict-free data types~\cite{shapiro2011conflict}  the procedure given in \cite{faleiro2012generalized} in terms of memory consumption, at the expense of progress. 

In message passing systems with crash failures, the lattice agreement problem is well studied ~\cite{attiya1995atomic, zheng2018linearizable, zheng2018lattice, mavronicolasabound}. The best upper bound for both synchronous systems and asynchronous systems is $O(\log f)$ rounds. In the Byzantine failure model, a variant of the lattice agreement problem is first studied by Nowak et al~\cite{nowak2019byzantine}. Then, Di Luna et al~\cite{di2019byzantine} proposes a validity condition which still permits the application of lattice agreement protocol in obtaining atomic snapshots and implementing a special class of replicated state machines. They present an $O(f)$ rounds algorithm for the Byzantine lattice agreement problem in asynchronous message systems. For synchronous message systems, a recent preprint~\cite{zheng2019byzantine} gives three algorithms. The first algorithm takes $O(\sqrt{f})$ rounds and has the early stopping property. The second and third algorithm takes $O(\log n)$ and $O(\log f)$ rounds but are not early stopping. All three algorithms can tolerate $f < \frac{n}{3}$ failures. They also show how to modify their algorithms to work for authenticated settings and tolerates $f < \frac{n}{2}$ failures. The preprint \cite{di2020synchronous} presents an algorithm which takes $O(\log f)$ rounds which can tolerate $f < \frac{n}{4}$ failures and shows how to improve resilience to $f < \frac{n}{3}$ by using digital signatures and public-key infrastructure.  
 
In this work, we present new algorithms for the Byzantine lattice agreement (BLA) problem in asynchronous message passing systems. In this problem, each process $i \in [n]$ has input $x_i$ from a join semi-lattice $(X, \leq, \sqcup)$ with $X$ being the set of elements in the lattice, $\leq$ being the partial order defined on $X$, and $\sqcup$ being the join operation. Each process $i$ has to output some $y_i \in X$ such that the following properties are satisfied. Let $C$ denote the set of correct processes in the system and $t$ denote the actual number of Byzantine processes in the system. 

\textbf{Comparability}: For all $i \in C$ and $j 
\in C$, either $y_i \leq y_j$ or $y_j \leq y_i$.

\textbf{Downward-Validity}: For all $i \in C$, $x_i \leq y_i$. 

\textbf{Upward-Validity}: $\sqcup \{y_i ~ | ~ i \in C\} \leq \sqcup(\{x_i ~ | ~ i \in C\} \cup B)$, where $B \subset X $ and $|B| \leq t$.  \\

Our main contribution is summarized as follows. 

\begin{restatable}{theorem}{logTheoremAsync} \label{log_theorem_async}
There is an $O(\log f)$ rounds algorithm for the BLA problem in asynchronous systems which can tolerate $f < \frac{n}{5}$ Byzantine failures, where $n$ is the number of processes in the system. The algorithm takes $O(n^2 \log f)$ messages. 
\end{restatable}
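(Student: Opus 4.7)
The plan is to combine two prior lines of work: the $O(\log f)$ synchronous Byzantine lattice agreement technique of \cite{zheng2019byzantine}, which uses a recursive height-classification scheme, and the $O(\log f)$ asynchronous crash-tolerant algorithm of \cite{zheng2018linearizable}, which shows how to tolerate arbitrary message delays while preserving a recursive divide-and-conquer structure. The main new ingredient will be a Byzantine-robust dissemination primitive (a gradecast-style certified-broadcast primitive, as hinted by the keyword ``Gradecast'') that prevents a Byzantine process from injecting many distinct values into the Upward-Validity slack set $B$.

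First I would build the basic primitive: each process, when broadcasting a value $v$, collects $n - f$ acknowledgments forming a certificate for $v$, and a correct process only accepts an incoming value if it carries such a certificate. Under $n > 5f$, any two certified quorums intersect in at least $n - 2f > 3f$ processes, and hence in at least $2f + 1$ correct processes. I would also make the primitive \emph{locking}: once a correct process has endorsed a particular proposal from sender $p$, it refuses to endorse any other proposal from $p$. The quorum intersection then forces any value carrying a valid certificate to be unique per Byzantine sender, so at most $t$ ``foreign'' values can ever enter the joins computed by correct processes, which is what Upward-Validity requires.

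Next I would arrange the algorithm as $O(\log f)$ recursive phases mirroring the synchronous scheme: in each phase, active processes exchange certified proposals, the ones whose values dominate a quorum promote themselves to a ``master'' group, while the rest adopt a value consistent with that group's join, halving the ``uncommitted'' height at each step. Unlike the synchronous case, processes wait for $n - f$ certified messages per phase rather than relying on lock-step rounds. Comparability would follow by induction on the phase number, using the quorum-intersection property to show that any two outputs of correct processes lie on a chain; Downward-Validity is immediate because a process only ever joins its input with other received values.

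The hard part will be reconciling asynchrony with the recursive structure: a Byzantine process can hold back its ``late'' certified value until after some correct processes have already moved to a higher recursive level, so one must argue that any such late value still fits inside the join taken by every correct output. The locking property of the primitive together with the $n > 5f$ bound should carry this argument through, since any later attempt by the same Byzantine sender is blocked by the correct endorsers already ``held'' by its earlier value. The message complexity of $O(n^2 \log f)$ then follows because each of the $O(\log f)$ phases uses $O(n)$ certified broadcasts, each costing $O(n)$ point-to-point messages.
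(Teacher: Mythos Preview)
Your proposal captures the high-level shape---a recursive master/slave classifier run for $O(\log f)$ phases, with a broadcast primitive that pins each Byzantine sender to at most one initial value---but it misses the central technical obstacle the paper isolates, and the mechanism you sketch does not close it.

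The difficulty is not the \emph{initial} round. Locking a Byzantine sender to one certified input handles Upward-Validity, and the paper does this too (via Bracha reliable broadcast in round~0). The hard part is the $\log f$ classification rounds themselves. After a split into $slave(G)$ and $master(G)$, a Byzantine process that \emph{claims} to be a slave can, in a later round, inject into the slave subgroup a value that no correct master ever saw---because in asynchrony that value's delivery can be delayed past every master's read. Your locking property only says a Byzantine sender cannot get two \emph{different} values certified; it says nothing about whether that single value was visible to the masters at the moment they fixed their output, nor about whether the sender is lying about which subgroup it belongs to. Quorum intersection alone does not force a Byzantine ``slave'' to have completed its write before the masters read, so the inductive invariant ``every slave value is below every master value'' breaks.

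The paper's fix is substantially more intricate than a certified broadcast: it augments reliable broadcast with a per-round \emph{validity predicate} that (i) forces any process claiming slave status at round $r{+}1$ to exhibit a proof array $pf$ of what it read at round $r$, checked entrywise by each echoer against its own records, and (ii) refuses to echo a \emph{read} message until the sender's \emph{write} has already been delivered, so even a Byzantine slave must write-then-read. This is what yields the ``admissible values'' bound $|U_{s(k,r)}^{r+1}|\le k$ and the containment $U_{s(k,r)}^{r+1}\subseteq V_j^{r+1}$ for every correct master $j$ (properties (p6)--(p7) of the key lemma). Your proposal has no analogue of this proof-of-classification machinery, and without it the induction on phase number does not go through. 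A secondary point: your certificate language (``carries such a certificate'') suggests transferable acknowledgments, i.e.\ signatures, but this theorem is the \emph{unauthenticated} one; the authenticated variant is the separate $f<n/3$ result.
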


\begin{restatable}{theorem}{logTheoremAsyncAuth} \label{log_theorem_async_auth}
There is a $O(\log f)$ rounds algorithm for the BLA problem in authenticated asynchronous systems which can tolerate $f < \frac{n}{3}$ Byzantine failures, where $n$ is the number of processes in the system. The algorithm takes $O(n^2 \log f)$ messages. 
\end{restatable}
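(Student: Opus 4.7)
The plan is to obtain the authenticated algorithm by starting from the $O(\log f)$ protocol of Theorem~\ref{log_theorem_async} and replacing its unauthenticated building blocks (in particular the gradecast-style primitives and the value-validation subroutines that are responsible for the $n > 5f$ threshold) with signature-based counterparts whose safety requires only $n > 3f$. Concretely, I would (i) let every process sign each value it broadcasts in a round, (ii) define a value $v$ to be \emph{verified at level $k$} if it is accompanied by $k$ valid signatures from distinct processes on statements of the required form, and (iii) replace each use of ``received matching values from $n - f$ processes'' with ``received a value carrying a certificate of $n - f$ signatures.'' This way, a Byzantine process cannot forge another process's contribution, which is the property that forced the large quorum overlap in the unauthenticated case.

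The core structural steps of the algorithm are left unchanged: the recursive halving of the ``active'' set of candidate values that gives the $O(\log f)$ round count, and the invariant that at each recursion level every correct process holds a value upper-bounded by the join of correct inputs plus at most $t$ Byzantine values. The only place where the resilience threshold enters is in arguing that (a) any two $(n-f)$-sized quorums of signatures intersect in at least $n - 2f \ge f+1$ processes, so they share a correct signer, and (b) a correct process can always collect an $(n-f)$-certificate because at most $f$ correct processes may be slow and the $f$ Byzantine processes are the only ones that can withhold signatures. Both facts hold as soon as $n > 3f$, which is exactly the improved bound claimed.

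Given these replacements, I would then re-verify the three validity/comparability properties. Downward-Validity is immediate because a process never lowers its own input. Upward-Validity follows because any value $v$ that gets accepted by some correct process carries signatures from at least $n-2f$ correct processes attesting that $v$ is bounded by a join of (correct inputs $\cup$ at most $t$ Byzantine inputs); the join of all such accepted values therefore also lies under that bound. Comparability at the end of the recursion follows by the same inductive argument as in the unauthenticated case, using the intersection property above to show that no two correct processes can finalize incomparable values.

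The main obstacle I expect is in the gradecast replacement: I need an authenticated asynchronous gradecast whose grade semantics still guarantee that if one correct process delivers $v$ with high grade, every correct process eventually delivers the same $v$ (possibly with lower grade), while using only the $n > 3f$ threshold. Once this primitive is in place, round and message complexity are inherited essentially unchanged from the algorithm of Theorem~\ref{log_theorem_async}: the recursion still has depth $O(\log f)$, each level uses a constant number of all-to-all exchanges of signed messages of size $O(n)$ (the certificate), giving $O(n^2 \log f)$ messages as asserted, with signatures contributing only a constant factor per message.
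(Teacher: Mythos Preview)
Your high-level intuition---use digital signatures so that unforgeable certificates replace the large quorum overlaps that forced $n>5f$---is exactly right, and it is also what the paper does. But the concrete plan you sketch does not match the algorithm you are supposed to be modifying, and the mismatch is substantive enough that the proposal as written would not yield a proof.

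The asynchronous algorithm of Theorem~\ref{log_theorem_async} is \emph{not} built on a gradecast primitive, and it does not work by ``recursive halving of the active set of candidate values.'' Its engine is the \emph{Classifier} procedure: each process in a group with threshold $k$ performs a \emph{write step} (BRB-broadcast its value set, wait for $n-f$ acks), a \emph{read step} (collect value sets from $n-f$ processes), and then compares the size of what it read against $k$ to become a master or a slave. The $n>5f$ bound arises from two places inside the Classifier: (i) a slave at round $r+1$ must \emph{prove} it really completed its write and read at round $r$, which in the unauthenticated protocol is done by an array $pf$ that each echoing process cross-checks against its own $RT$ record; and (ii) the argument for properties (p6) and (p7) of Lemma~\ref{lem:cls} needs the intersection $(\lfloor(n+f)/2\rfloor+1-f)+(n-2f)>n-f$. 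There is no gradecast to swap out, and your ``value carrying an $(n-f)$-certificate'' idea does not correspond to any object in the protocol.

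What the paper actually changes is narrow and specific: each \emph{wack} and \emph{rack} message is signed. A process proves completion of its write step by exhibiting $n-f$ signed \emph{wack}s (so the read step no longer needs BRB at all), and a slave proves its classification at round $r+1$ by exhibiting the $n-f$ signed \emph{rack}s it collected at round $r$. With unforgeable signatures these proofs are transferable, so the cross-check against $RT_i^{r-1}[j]$ is no longer needed, and the relevant intersection becomes $2(n-2f)>n-f$, which holds for $f<n/3$. The proof then consists of re-establishing properties (p1)--(p11) of Lemma~\ref{lem:cls} for the signed Classifier under $f<n/3$; only (p6)--(p9) need nontrivial rework. Your write-up should engage with that structure rather than with a gradecast that is not there.
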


%\section{Notations} 
%\section{$O(\log f)$ Rounds Algorithm for the Synchronous BLA Problem}
%Why we cannot just use the set of values with score 2 as upper bound for the slave group and the set of values with score 1 as upper bound for the master group. 
%A slave process's value may not be gradecast with score 2 in later rounds. Then property 1 might be broken. 

%why we cannot additional round of sending the set of values with score 2. 

\section{System Model}
We assume a distributed asynchronous message system with $n$ processes with unique ids in $[1,2,...,n]$. The communication graph is a clique, i.e., each process can send messages to any other process in the system. We assume that the communication channel between any two processes is reliable. There is no upper bound on message delay. We assume that processes can have Byzantine failures but at most $f < n/3$ processes can be Byzantine in any execution of the algorithm. We use parameter $t$ to denote the actual number of Byzantine processes in a system. By our assumption, we must have $t \leq f$. Byzantine processes can deviate arbitrarily from the algorithm. We say a process is correct or non-faulty if it is not a Byzantine process. We consider both systems with and without digital signatures. In a system with digital signatures, Byzantine processes cannot forge the signature of correct processes. 

\section{$O(\log f)$ Rounds Algorithm for the Asynchronous BLA Problem}
In this section, we present an algorithm for the BLA problem in asynchronous systems which takes $O(\log f)$ rounds of asynchronous communication and tolerates $f < \frac{n}{5}$ Byzantine failures. The high level idea of the algorithm is to apply a Byzantine tolerant classifier procedure, similar to the classifier procedure in \cite{zheng2018lattice} for crash failure systems, to divide a group of processes into the slave subgroup and the master subgroup such that the values of the slave group is less than the values of the master group. Then, by recursively applying such a classifier procedure within each subgroup, eventually all processes have comparable values. The classifier procedure for crash failure model only needs to guarantee that the value a correct slave process is at most the value of any correct master process and the size of the union of all values of correct slave processes is at most $k$. The parameter $k$ is a knowledge threshold. 

Fig. \ref{fig:clsTree} shows the classification tree which specifies the threshold parameter for each group when the classifier procedure is invoked recursively. Before recursively invoking the classifier procedure, an initial round is used to let all processes exchange their input values. After this initial round, each process obtains at least $n - f$ values. Each node in the tree represents a group. We also use label to indicate the threshold parameter of a group. Initially, all processes are in the same group with label $n - \frac{f}{2}$. Let $G$ be a group with label $k$ at level $r$. The master group (the right child node) of $G$ has label $k + \frac{f}{2^{r + 1}}$. The slave group (the left child node) of $G$ has label $k - \frac{f}{2^{r + 1}}$. We can observe that all labels in the classification tree up to level $\log f$ are unique. The above properties of the classifier procedure guarantee that processes in a leaf group must have the same value. 

\begin{figure}[htb]
\begin{minipage} {0.5\textwidth}
\begin{tikzpicture}[
roundnode/.style={circle, draw=green!60, very thick, minimum size=7mm},
covernode/.style={circle, draw=green!60, fill=green!60, very thick, minimum size=7mm},
]

%Nodes
\node[roundnode,label=above:{$n - \frac{f}{2}$}]			(G)       					    {};
\node[roundnode,label=above:{$n - \frac{3f}{4}$}]			(S)        [below left=0.4cm and 0.8 of G] 		
		{};
\node[roundnode,label=above:{$n - \frac{f}{4}$}]			(M)			[below right= 0.4cm and 0.8 of G]		
		{};
\node[roundnode,label=below:{$n-f$}]			
(l1)			[below left= 0.4 and 0.8 of S]		
		{};
\node[roundnode,label=below:{$n-f + 1$}]			
(l2)			[right = 0.6 of l1]		
		{};

\node[roundnode,label=below:{$n$}]			
(lf)			[below right= 0.4 and 0.8 of M]		
		{};
		\node[roundnode,label=below:{$n-1$}]			
(lf1)			[left = 0.6 of lf]		
		{};
\node[left = 2.8 of G] (r1) {$level~ 1:$};
\node[left = 1.5 of S] (r2) {$level ~2:$};
\node[left = 0.2 of l1] (r3) {$level ~ \log f + 1:$};
\path (r2) -- node[auto=false,thick]{\vdots} (r3);
%edges
\draw[-] (G) -- (S);
\draw[-] (G) -- (M);
\draw[dashed] (S) -- (l1);
\draw[dashed] (M) -- (lf);
\hspace{0.025in} 
\path (l2) -- node[auto=false,thick]{\ldots} (lf1);

\end{tikzpicture}
\end{minipage}
\caption{The Classification Tree}
\label{fig:clsTree}
\end{figure}
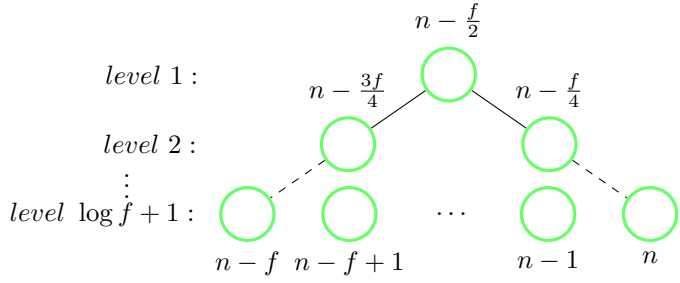

In presence of Byzantine processes, the above properties are not enough for recursively applying such classifier procedure within each subgroup. A Byzantine process in a slave group can introduce some new values which are not known by some master process. We introduce the notion of admissible values for a group (to be defined later), which is the set of values that processes in this group can ever have. We present a Byzantine tolerant classifier procedure with threshold parameter $k$ which provides the following three properties: 1) Each correct slave process has at most $k$ values and each correct master process has more than $k$ values. 2) The admissible values of the slave group is a subset of the value of any correct master process. 3) The union of all admissible values in the slave group has size less than the threshold parameter $k$.  

Suppose now we have a classifier which guarantees the above properties. The main algorithm, shown in Fig. \ref{fig:main_algorithm}, proceeds in asynchronous rounds.  Each process $i$ has a label $l_i$, which is used as the threshold parameter when it invokes the classifier procedure. Initially, each process has the same label $k_0 = n - \frac{f}{2}$. The label of a process is updated at each round according to the classification tree. In some places of our algorithm, we use the reliable broadcast primitive defined by Bracha~\cite{bracha1987asynchronous} to send values. In this primitive, a process uses {\bf RB\_broadcast} to send a message and uses {\bf RB\_broadcast} to reliably deliver a message. This primitive guarantees many nice properties. In our algorithm, we need the following two main properties: 1) If a message is reliably delivered by some correct process, then this message will eventually be reliably delivered by each correct process. 2) If a correct process reliably delivers a message from process $p$, then each correct process reliably delivers the same message from $p$. 

In the initial round at line 1-2, process $i$ {\bf RB\_broadcast} its input $x_i$ to all and waits for {\bf RB\_deliver} from $n-f$ different processes. Then, it updates its value set to be the set of values reliably delivered at this round. When reliable delivering a value, process $i$ adds this value into its safe value set for the initial group $k_0 = n - \frac{f}{2}$. The reliable delivering procedure is running on background. So this safe value set for the initial group keeps growing. By the properties of reliable broadcast, this safe value set can only contain at most one value from each process. This ensures {\bf Upward-Validity}. 

After the initial round, we can assume that all values in the initial safe value set of each process are unique, which can be done by associating the sender's id with the value. 

From line 3 to line 8, process $i$ executes the classifier procedure (to be presented later) for $\log f$ rounds. At each round, it invokes the classifier procedure to decide whether it is classified as a slave or master and then updates its value accordingly. At round $r$, if process $i$ is a master, it updates its label to be $l_i := l_i + \frac{f}{2^{r + 1}}$. Otherwise, if updates its label to be $l_i := l_i - \frac{f}{2^{r + 1}}$. 

Applying the above three properties provided by the classifier procedure, by induction on the round number, we can readily see that at the end of round $\log f$, for any two correct process $i$ and $j$, if they are in the same group, say with label $k$, then both $i$ and $j$ must have exactly $k$ values and their union also has exactly $k$ values. Then, $i$ and $j$ must have the same set of values. If they are in different group, then by recursively applying property 2, the values of one process must be subset of the values of the other process. 

We now show a Byzantine tolerant classifier which satisfies the properties we define. 
\begin{figure} [htbp] 
\fbox{\begin{minipage}[t]  {4in}
\noindent
\underline{Code for process $i$:} \\
$x_i$: input value \h $y_i$: output value \\
$l_i$: label of process $i$. Initially, $l_i = k_0 = n - \frac{f}{2}$:  \\
$V_i^r$: value set held by process $i$ at round $r$ of the algorithm\\ 
Map $S_i$: $S_i[k]$ denote the safe value set for group $k$\\

 /* Initial Round */ \\
{\bf 1:} {\bf RB\_broadcast}($x_i$), wait for $n - f$ {\bf RB\_deliver}($x_j$) from $p_j$  \\
{\bf 2:} Set $V_i^1$ as the set of values reliably delivered \\

/* Round 1 to $\log f$ */ \\
{\bf 3:} {\bf for} {$r := 1$ to $\log f$}\\
{\bf 4:} \h ($V_i^{r + 1}, class$) := $Classifier(V_i^{r}, l_i, r)$\\
{\bf 5:} \h {\bf if} {$class = master$} \h {\bf then} $l_i := l_i + \frac{f}{2^{r + 1}}$\\
{\bf 6:} \h {\bf else} \h $l_i := l_i - \frac{f}{2^{r + 1}}$\\
{\bf 7:} {\bf end for}\\
{\bf 8:} $y_i := \sqcup \{v \in V_i^{\log f + 1}\}$ \\
 
 {\bf Upon RB\_deliver}$(x_j)$ from $p_j$ \\
 \h $S_i[k_0] := S_i[k_0] \cup x_j$
\end{minipage}
}
\caption{$O(\log f)$ Rounds Algorithm for the BLA Problem} \label{fig:main_algorithm}
\end{figure}

\subsection{The Byzantine Tolerant Classifier}
The Byzantine tolerant classifier procedure, shown in Fig. \ref{fig:classifier}, is inspired by the asynchronous classifier procedure given in \cite{zheng2018linearizable} for the crash failure model. We say a process writes a value to at least $n - f$ processes if it sends a \quotes{write} message containing the value to all processes and waits for $n - f$ different processes to send acknowledgement back. We say a process reads from at least $n - f$ processes if it sends a \quotes{read} message to all processes and waits for at least $n - f$ processes to send their current values back. We say a process performs a write-read step if it writes its value to at least $n - f$ processes and reads their values. 

In the asynchronous classifier procedure for the crash failure model \cite{zheng2018linearizable}, to divide a group into a slave subgroup and a master subgroup, each process in the group first writes its value to at least $n - f$ processes and then reads from at least $n - f$ processes. After that, each process checks whether the union of all values read has size greater than the threshold parameter $k$ or not. If true, it is classified as a master process, otherwise, it is classified as a slave process. Slave processes keep their value the same. To guarantee that the value of each slave process is less than the value of each master process, each master process performs a write-read step to write the values obtained at the read step to at least $n - f$ processes and read the values from them. Then it updates its value to be the union of all values read. The second read step guarantees that the size of the union of values of slave processes is at most $k$, since the last slave process which completes the write step must be able to read all values of slave processes. 

Constructing such a classifier procedure in presence of Byzantine processes is much more difficult. In order to adapt the above procedure to work in Byzantine failure setting, we need to address the following challenges. First, in the write step or read step, when a process waits for at least $n - f$ different processes to send their values back, a Byzantine process can send arbitrary values. Second, simply ensuring that the values of a slave process is a subset of values of each master process is not enough, since a Byzantine process can introduce some values unknown to a master process in the slave group. For example, even if we can guarantee that the current value of each slave process is less that the value of each master process, in a later round, a Byzantine process can send some new value to a slave process which is unknown to some master process. This is possible in an asynchronous systems since messages can be arbitrarily delayed. Third, ensuring that the union of all values in the slave group has size at most $k$ is quite challenging. A simple second read step does not work any more since the last process which completes the write step might be a Byzantine process. 

To prevent the first problem, in the Byzantine classifier procedure, when a process wants to perform a write step or read step, it applies the reliable broadcast primitive to broadcast its value. When a process waits for values from at least $n - f$ processes, it only accepts a value if the value is a subset of the values reliably delivered by this process. By property of reliable broadcast, this ensures that each accepted value must be reliably broadcast by some process, which prevents Byzantine processes from introducing arbitrary values. 

To tackle the second and third problem, the key idea is to restrict the values that a Byzantine process, which claims itself to be a slave process, can successfully reliable broadcast in later rounds. To achieve that, first we require that that a slave process can only reliable broadcast the value that it has reliably broadcast in the previous round. This prevents Byzantine processes from introducing arbitrary new values into a slave group. Second, we require each process which claims itself as a slave process to prove that it is indeed classified as a slave at the previous round when it tries to reliable broadcast a value at the current round by presenting the set of values it used to do classification. To enforce the above two requirements, we add a validity condition when a process echoes a message in the reliable broadcast primitive.  However, this is not enough, since the value of a Byzantine slave process might not be known to a master process if broadcast value of the Byzantine process is arbitrarily delayed. To ensure that the value a Byzantine process reliably broadcast to be read by each correct master process, we force a Byzantine process who wants to be able to reliable broadcast a value in the slave group at next round to actually write its value to at least $\lfloor \frac{n + f}{2} \rfloor + 1 - f$ correct processes, i.e., at least $\lfloor \frac{n + f}{2} \rfloor + 1 - f$ correct processes must have received the value of a Byzantine process before each correct master process tries to read from at least $n - 2f$ correct processes. These two sets of correct processes must have at least one correct process in common since $f < \frac{n}{5}$. 

The validity condition we add in the reliable broadcast ensures that the admissible values for the slave group must be subset of the value set of any correct master process and the union of these values has size at most the threshold $k$. 

Each process which is classified as master is not required to prove its group identity but the value it tries to broadcast has to be a subset of safe value sets of correct processes. Similar to the asynchronous classifier procedure in \cite{zheng2018linearizable}, to ensure that the value of a slave process is less than the value of a master process, the master process needs to write to and read from at least $n - f$ processes after it is classified as a master process.

\subsubsection{Bounded Reliable Broadcast}
 \begin{figure}[htb] 
\fbox{\begin{minipage}[t]  {5.6in}
\noindent
\underline{{\bf BRB\_broadcast($type, pf, v, k, r$)}} \\
$type$ denotes the type of the message to be sent, either \quotes{write} or \quotes{read} \\
$pf$ is an array which is a proof of sender's group identity \\
$v$ is the value to be sent, $k$ is the label of the sender, $r$ is the round number \\
The \textit{valid} function is defined in Fig. \ref{fig:valid_function}\\

{\bf Broadcast} INIT$(i, type, pf, v, k, r)$ to all \\

{\bf Upon} receiving INIT$(j, t_j, pf_j, v_j, k_j, r_j)$\\
\h {\bf if} (first reception of INIT$(j, t_j, -, -,-, r_j)$ \\
\h\h {\bf wait until} $valid(t_j, pf_j, v_j, k_j, r_j)$ \\
\h\h {\bf Broadcast} ECHO$(j, t_j, v_j, k_j, r_j)$ to all \\ 
\h {\bf endif} \\

{\bf Upon} receiving ECHO$(j, t_j, pf_j, v_j, k_j, r_j)$\\
\h {\bf if} ECHO$(j, t_j, pf_j, v_j, k_j, r_j)$ is received from at least $\lfloor \frac{n + f}{2} \rfloor + 1$ different processes \\ \h\h $\wedge$ READY$(j, v_j, k_j, r_j)$ has not yet broadcasted\\
\h\h {\bf Broadcast} READY$(j, t_j, pf_j, v_j, k_j, r_j)$ \\
\h {\bf endif} \\

{\bf Upon} receiving READY$(j, t_j, pf_j, v_j, k_j, r_j)$ \\
\h {\bf if} READY$(j, t_j, pf_j, v_j, k_j, r_j)$ received from $f + 1$ different processes $\wedge$ \h\h \h READY$(j, t_j, pf_j, v_j, k_j, r_j)$ has not been broadcasted\\ 
\h\h {\bf Broadcast} READY$(j, t_j, pf_j, v_j, k_j, r_j)$ \\
\h {\bf endif} \\
\h {\bf if} READY$(j, t_j, pf_j, v_j, k_j, r_j)$ received from $2f+ 1$ different processes $\wedge$ $(j, t_j, pf_j, v_j, k_j, r_j)$ \h\h has not been delivered\\ 
\h\h {\bf BRB\_deliver}$(j, t_j, pf_j, v_j, k_j, r_j)$ \\
\h {\bf endif} 
\end{minipage}
} % end \fbox

\caption{Bounded Reliable Broadcast \label{fig:brb}}
\end{figure}

Before explaining the Byzantine classifier procedure in detail, we modify the reliable broadcast primitive by adding a condition when a process echoes a broadcast message. This condition (to be explained later) restricts the admissible values for each group. For completeness, the modified reliable broadcast procedure is shown in Fig. \ref{fig:brb}. When a process reliable broadcast a value, it also includes the round number, the label of the group it belongs to and a proof of its group identity. The proof is an array of size $n$ denoting the values read by the sender at previous round. %The only modification we make is the validity condition for when a process should echo a broadcast message. 
When a process $i$ receives a broadcast message from process $j$, it waits for the validity condition to hold and then echoes the message. %The validity condition is specified by the \textit{valid} function defined in Fig. \ref{fig:valid_function}. 
We say a process BRB\_broadcasts a message if it executs {\bf BRB\_broadcast} procedure with the massage. We say a process BRB\_delivers a message if it executes {\bf BRB\_deliver} with this message.  

%A Byzantine process can claim that it belongs to an arbitrary group. 

\subsubsection{Group and Admissible Values}
In our algorithm, each process $i$ has a label $l_i$, which serves as the threshold when it invokes the classifier procedure. The notion of group defined as below is based on labels of processes. 
 
 \begin{definition}[group]
A $group$ is a set of processes which have the same label. The label of a group is the label of the processes in this group. The label of a group is also the threshold value processes in this group use to do classification. 
\end{definition}
We also use label to indicate a group. We say a process is in group $k$ if its message is associated with label $k$. Initially all processes are within the same group with label $k_0 = n - \frac{f}{2}$. The label of each process is updated at each round based on whether it is classified as a slave or a master. 

We introduce the notion of admissible values for a group, which are the set of values that processes in the group can ever have. 
\begin{definition}[admissible values for a group]
The admissible values for a group $G$ with label $k$ is the set of values that can be reliably delivered with label $k$ if they are reliably broadcast by some process (possibly Byzantine) with label $k$.
\end{definition}

In our classifier, each process in group $k$ updates its value set to a subset of the values which are reliably delivered with label $k$. Thus, the value set of each process in group $k$ must be a subset of the admissible values for group $k$. Our algorithm ensures that this property holds continues to hold until the end of the algorithm.

%From the classification tree in Fig. \ref{fig:clsTree}, we can readily have the following lemma. 
% \begin{lemma} \label{lem:unique_label}
% All labels in the classification tree are unique, i.e., given a label $k$ in the classification tree at round $r$, it is either $s(k_1, r - 1)$ for some label $k_1$ at round $r - 1$ or $m(k_2, r - 1)$  for some label $k_2$ at round $r - 1$, but not both. 
% \end{lemma} 

\subsection{The Classifier}
The classifier procedure for process $i \in [n]$, shown in Fig. \ref{fig:classifier}, has three input parameters: $V$ is the current value set of process $i$, $k$ is the threshold value used to do the classification, which is also the current label of process $i$, and $r$ is the round number. 

\begin{figure}[htbp] 
%\begin{centering}
\fbox{\begin{minipage}[t]  {5.6in}
\noindent
\underline{{\bf \textit{Classifier}$(V, k, r)$ for $p_i$:} } \\
$V$: input value set \h $k$: threshold value \h $r$: round number\\
/* Each process $i \in [n]$ keeps track of the following variables */ \\
Array $LB_i^r$. $LB_i^r[j]$ denotes the label of process $j$ sent along its values at round $r$\\
Map $S_i$. $S_i[k]$ denotes a safe value set for group $k$ \\
Map $ACV_i^r$. $ACV_i^r[k]$ denotes the set of values accepted with label $k$, initially $ACV_i^r[k] := \emptyset$ \\
%Set $RD_i^r$, which stores all the reliably delivered values at round $r$ by process $i$\\
Map $RV_i^r$. $RV_i^r[j]$ denote the values process $i$ read from process $j$ at round $r$.\\
Map $RT_i^r$. $RT_i^r[j]$ denote the values process $j$ read from process $i$ at round $r$.\\

/* write step*/\\
{\bf 1: } {\bf if} $isSlave(k, r)$ {\bf then} $pf := RV_i^{r - 1}$  {\bf else} $pf := \emptyset$ \\
{\bf 2: } BRB\_broadcast$("write", pf, v, k, r)$, wait for $wack(-, r)$ from $n - f$ different processes \\

/* read step*/\\
{\bf 3: } BRB\_broadcast$(``read", -, -, k, r)$, wait for $n - f$ $rack(R_j, r)$ s.t. $R_j \subseteq ACV_i^r[k]$ from $p_j$ \\
%{\bf 7:} Let $R_j$ denote the value contained in the $rack$ from $p_j$ \\
{\bf 4: } Set $RV_i^r[j] := R_j$ if $R_j \subseteq ACV_i^r[k]$, otherwise $RV_i^r[j] := \emptyset$\\

/* Classification */\\
{\bf 5: } Let $T_ := \bigcup \limits_{j = 1}^{n} RV_i^r[j]$\\
{\bf 6: } {\bf if} {$|T| > k$} ~/* height is greater than its label */ \\
\h\h ~/* write-read step */ \\
{\bf 7: } \h Send $master(T, k, r)$ to all, wait for $n - f$ $mack(R_j, r)$ from $p_j$ s.t. $R_j \subseteq ACV_i^r[k]$\\
{\bf 8: } \h Define $T' := \cup \{R_j ~| ~ R_j \subseteq ACV_i^r[k], j \in [n]\}$ \\
{\bf 9: } \h {\bf return} ($T'$, \textit{master})\\
{\bf 10:} {\bf else}\\
{\bf 11:} \h{\bf return} ($V$, \textit{slave})\\
\end{minipage}
} % end \fbox

\fbox{\begin{minipage} {5.6in}
\bf{Upon} BRB\_Deliver$(j, type, -, v, k, r)$\\
\h {\bf if} $type = ``write"$ \\
\h\h $S_i[m(k, r)] := S_i[m(k, r)] \cup v$ /* Construct safe value set for group $m(k, r)$ */\\
\h\h $ACV_i^{r}[k] := ACV_i^{r}[k] \cup v$\\
\h\h $LB_i^{r}[j] := k$ /* Record the label of a process at round $r$ */\\
\h\h Send $wack(-, r)$ to $p_j$\\
\h {\bf elif} $type = ``read"$ \\
\h\h $RT_i^{r}[j] := ACV_i^{r}[k]$ \\
\h\h Send $rack(ACV_i^{r}[k], r)$ to $p_j$ \\
\h {\bf endif} \\

\bf{Upon} receiving $master(T, k, r)$ from $p_j$\\
\h {\bf wait until} $T \subseteq ACV_i^r[k]$ \\
\h Send $mack(ACV_i^r[k], r)$ to $p_j$ 
\end{minipage}
} % end \fbox
\caption{\textit{The Byzantine Classifier Procedure} \label{fig:classifier}}
\end{figure}

In line 1-2, process $i$ writes its current value set to at least $n - f$ processes by using the {\bf BRB\_broadcast} procedure to send a \quotes{write} message. If process $i$ is classified as a slave at the previous round, it needs to include the array of values it read from at least $n - f$ processes at previous round as a proof of its group identity. This proof is used by every other process in the \textit{valid} function to decide whether to echo the \quotes{write} message or not. When process $i$ BRB\_delivers a \quotes{write} message with label $k$ at round $r$, it includes the value in it into its safe value set for group $m(k,r)$. The safe value set is used to restrict the set of values that can be delivered in the master group $m(k, r)$. Due to this step, we can see that the admissible values in the master subgroup must be a subset of the admissible values at the current group. Process $i$ also includes the value contained in the \quotes{write} message into $ACV_i^r[k]$, which stores the set of values reliably delivered with label $k$ at round $r$. 

From line 3 to line 4, process $i$ reads values from at least $n - f$ processes by using the {\bf BRB\_broadcast} procedure to send a \quotes{read} message to all. In the \textit{valid} function, each process $j$ echos a \quotes{read} message from process $i$ only if it has BRB\_delivered the \quotes{write} message from process $i$ sent at line 2. This step is used to ensure that for any process, possibly Byzantine, to read from other processes, it must have written its value to at least $\lfloor \frac{n + f}{2} \rfloor + 1 - f$ correct processes, otherwise it cannot have enough processes echo its \quotes{read} message in the {\bf BRB\_broadcast}. When process $i$ BRB\_delivers a \quotes{read} message with label $k$ from process $j$ at round $r$, it records the set of values it has reliably delivered with label $k$ in $RT_i^r[j]$. Then process $i$ sends back a $rack$ message along with the set of reliably delivered values with label $k$ at round $r$ to process $j$. At line 3, after the \quotes{read} message is sent, process $i$ has to wait for valid $rack$ message from $n - f$ processes. A $rack$ message is valid if the value set contained in it is a subset of $ACV_i^r[k]$, which is the set of values reliably delivered with label $k$ at round $r$. Consider a $rack(R_j, r)$ message from a correct process $j$. Since $j$ is correct, each value in $R_j$ must have been reliably delivered by process $j$. By property of reliable broadcast, each value in $R_j$ will eventually be reliably delivered by process $i$, thus $R_j \subseteq ACV_i^r[k]$. Thus, eventually process $i$ can obtain $n - f$ valid $rack$ message. At line 4, process $i$ records the set of valid $R_j$s obtained at line 3 into array $RV_i^r$. So, this array stores the values reliably delivered with label $k$ that process $i$ read from all processes. This array is used to do classification in line 5-11 and also used as the proof of group identity of process $i$ when it writes at next round. 

Line 5-11 is the classification step. Process $i$ is classified as a master process if the size of the union of valid values obtained in the read step is greater than its label $k$, otherwise, it is classified as a slave process. If it is classified as a slave process, it returns its input value set. If it is classified as a master process, process $i$ performs a write-read step by sending a $master$ message which includes the set of values it uses to do classification to all and wait for $n-f$ valid $mack$ message back at line 7. Similar to line 3, a $mack$ message is valid if each value contained in it has been reliably delivered with correct label. When a process receives a $master$ message with value set $T$ and label $k$ at round $r$, it first waits until all values in $T$ are reliably delivered. Then it sends back a $mack$ message along with the set of values reliably delivered with label $k$ at round $r$. The waiting is used to ensure that each value in $T$ is valid, i.e., be reliably delivered, because a Byzantine process can send arbitrary values in its $master$ message at line 7. By a similar reasoning as line 3, process $i$ will eventually obtain valid $mack$ message from at least $n - f$ different processes. After the write-read step, at line 8, process $i$ updates its value set to be the union of values obtained at line 7. 

 \begin{figure}[htbp] 
\fbox{\begin{minipage}[t]  {5.6in}
\noindent
{\bf function } $Valid(j, type, pf, v, k, r)$ for process $i$:\\ 
\h {\bf if} $(type = ``write" \wedge \lnot isSlave(j, k, r) \wedge v \subseteq S_i[k])$ \\
\h\h $\vee~ (type = ``write" \wedge isSlave(j, k,r) \wedge \text{BRB\_deliver}(j, ``write", -, v, LB_i^{r - 1}[j], r - 1)$ \\
\h\h\h\h $~\wedge ~pf[i] = RT_i^{r - 1}[j] \wedge |\bigcup \limits_{j = 1}^{n} pf[j]| \leq LB_i^{r - 1}[j])$ \\ \\
\h\h $\vee ~ (type = ``read" \wedge \text{BRB\_deliver}(j, ``write", -, -, k, r))$\\
\h\h {\bf return} $True$\\
\h {\bf else}\\
\h \h {\bf return} $False$ \\
\h {\bf endif} \\
%$\wedge$ BRB\_deliver$(j, ``read", -, -, LB_i^{r - 1}[j], r - 1)

 {\bf function} $isSlave(j, k, r)$ for process $i$: \\
\h {\bf if} $k = LB_i^{r - 1}[j] - \frac{f}{2^{r}}$ \\
\h\h {\bf return} $True$ \\
\h {\bf else} \\
\h\h {\bf return} $False$
\end{minipage}
} % end \fbox
\caption{\textit{The Valid Function} \label{fig:valid_function}}
\end{figure}

The \textit{valid} function is defined in Fig. \ref{fig:valid_function}. In the this function, we first consider the \quotes{write}
messages. If the message has been sent by a process that claims to be a master, then it is considered valid if the value $v$ in this message is contained in the safe value set $S_i[k]$. 
If the message has been sent by a process that claims to be a slave, 
then process $i$ checks (1) whether process $i$ has BRB\_delivered the \quotes{write} message containing the same value at the previous round,
(2) whether the $i^{th}$ entry in $pf$ array matches the value process $j$ read from $i$ in the previous round, and (3) whether the the number 
of values contained in the proof $pf$ is at most $k$. 
The condition (1) ensures that a slave process sends the same value as the previous round since a correct slave process must keep its value same as in the previous round. The condition (2) ensures that the proof
sent by the slave process uses values that it read at round $r-1$.
The condition (3) checks that the sender classified itself correctly.

If the message is a \quotes{read} with label $k$ at round $r$, process $i$ considers it as valid if it BRB\_deliverd a \quotes{write} message with label $k$ at round $r$ from the sender. This is used to make sure that the sender (possibly Byzantine) must complete its write step in line 1-2 before trying to read at line 3-4. 

The $isSlave$ function invoked in the \textit{valid} function simply checks whether the label of the sender matches the label update rule by comparing it with the label at previous round.

% process $i$ considers a \quotes{write} message for round $r$ with label $k$ is valid if the following conditions are satisfied. Process $i$ first checks if the sender is classified as slave at previous round by invoking the isSlave function. The isSlave function simply checks whether the label of the sender matches the label updating rule by comparing with its label at previous round. If the sender is a master, then process $i$ considers the message as valid if the value set in it is a subset of process $i$'s safe value set for group $k$. This ensures that the values which can be reliably delivered with a master label must be in the safe value set of correct processes. If the sender is a slave, then process $i$ checks whether process $i$ has BRB\_delivered the \quotes{write} message containing the same value at previous round. This condition is used to ensure that a slave process must send the same value as previous round since a correct slave process must keep its value the same. If this condition is satisfied, then process $i$ checks whether the value the sender reads from it matches what process $i$ sent to the sender at round $r - 1$. The most important condition is whether the union of the values contained in the proof has size at most $k$ or not. 

\subsection{Proof of Correctness}
We first define the notion of \textit{committing} a message. 
\begin{definition}
We say a process {\bf commits} a message if it reliably broadcasts the message and the message is reliably delivered. We say a process {\bf commits} a message at time $t$ if this message is reliably delivered by the first process at time $t$.
\end{definition}

By properties of reliable broadcast, we observe that each process (possibly Byzantine) can commit at most one \quotes{write} message and at most one \quotes{read} message at each round.

\begin{table*}[h]
\caption{Notations}
\begin{tabular}{ |c|c|} 
 \hline
 \textbf{Variable} & \textbf{Definition} \\ \hline
 $G$ & A group of processes at round $r$ with label $k$ \\ \hline 
 $slave(G)$ & The slave subgroup of $G$, i.e., the processes with label $s(k,r)$ at round $r + 1$ \\ \hline 
 $master(G)$ & The master subgroup of $G$, i.e., the processes with label $m(k,r)$ at round $r + 1$ \\ \hline 
 $V_i^r$& The value set of process $i$ at the beginning of round $r$ \\ \hline
 $S_i^r$ & \makecell{The safe value map of process $i$ at the beginning of round $r$ \\ $S_i^r[k]$ is the safe value set of process $i$ for group $k$ at the beginning of round $r$}\\ \hline 
 $U_{k}^r$ & The set of admissible values for group $k$ at round $r$ \\ \hline 
\end{tabular}
\label{tab:variables}
\end{table*}

Define $s(k,r) = k - \frac{f}{2^{r + 1}}$ and $m(k, r) = k + \frac{f}{2^{r + 1}}$. The variables we use in the proof are shown in Table. \ref{tab:variables}. Consider the classification step in group $k$ at round $r$. The following lemma shows that if a Byzantine process wants to commit a \quotes{write} message $m$ at round $r + 1$ with a slave label, then it must commit a \quotes{write} message $m'$ which contains the same value as $m$ and a \quotes{read} message at round $r$ with label $k$. Also, it must commit its \quotes{read} message before its \quotes{write} message at round $r$ with label $k$. 
\begin{lemma} \label{lem:must_read_write} 
Suppose that process $i$ (possibly Byzantine) commits a write message \\
$(i, ``write", -, V_i,s(k,r), r + 1)$. Then \\
1) The message $(i, ``read", -, -, k, r)$ and the message $(i, ``write", -, V_i, k, r)$ must be committed by process $i$.\\
2) Let $t$ denote the time that message $(i, ``read", -,-,k,r)$ is committed. Then, the message $(i, ``write", -, V_i, k, r)$ must have been reliably delivered by at least $\lfloor \frac{n + f}{2} \rfloor + 1 - f$ correct processes before time $t$.  
\begin{proof}
For correct process $i$, the claim is obvious. 

Suppose $i$ is Byzantine. 
For it to commit a message with label $s(k,r)$ at round $r + 1$, its broadcast message has to be echoed by at least $\lfloor \frac{n + f}{2} \rfloor + 1$ different processes. Thus, it has to prove the values it read at round $r$ to at least $\lfloor \frac{n + f}{2} \rfloor + 1 - f$ different correct processes, which implies that it must commit $(i, ``read", -,-,k,r)$.  %Hence, process $i$ must have reliably broadcast a read message at round $r$ with label $k$ and be reliably delivered by at least $\lfloor \frac{n + f}{2} \rfloor + 1 - f$ different correct processes. This must happen before time $t_0$. 
For its read message to be reliably delivered, by the condition to echo a read message, we know that process $i$'s write message with label $k$ and value $V_i$ at round $r$ must be reliably delivered by at least $\lfloor \frac{n + f}{2} \rfloor + 1 - f$ correct processes. 

\end{proof}
\end{lemma}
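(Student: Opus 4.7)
My plan is to read both claims directly off the validity conditions that correct processes must enforce when they echo the round-$(r+1)$ slave write, and then to chase Bracha's broadcast protocol backwards in time to extract the timing bound in part 2.

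For part 1, I would start by noting that since $(i, ``write", -, V_i, s(k,r), r+1)$ is committed, Bracha's protocol guarantees that at least $\lfloor \frac{n+f}{2} \rfloor + 1$ processes sent an ECHO for it, so at least $\lfloor \frac{n+f}{2} \rfloor + 1 - f \geq 1$ of those ECHO senders are correct. Each such correct echoer $q$ must have satisfied the slave branch of the $Valid$ function, and two of its conjuncts do all the work. The conjunct $\text{BRB\_deliver}(i, ``write", -, V_i, k, r)$ forces $q$ to have delivered the round-$r$ write from $i$ with exactly the value $V_i$, so that write was committed. The conjunct $pf[q] = RT_q^{r-1}[i]$ can evaluate to true only if $RT_q^{r-1}[i]$ has already been assigned, which occurs only when $q$ BRB\_delivers a round-$r$ read from $i$, so the round-$r$ read is committed as well.

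For part 2, I would trace the Bracha broadcast of the round-$r$ read backwards from the first-delivery time $t$. Delivery at $t$ requires $2f+1$ READY messages, so at least $f+1$ of them come from correct processes; hence some correct process sent a READY strictly before $t$, and I let $q'$ be the earliest such sender. If $q'$'s READY were triggered by the $f+1$-READY amplification rule, at least one of those $f+1$ READYs would have come from a correct process that sent its READY even earlier, contradicting the choice of $q'$. Therefore $q'$'s READY was triggered by receiving $\lfloor \frac{n+f}{2} \rfloor + 1$ ECHOs, of which at least $\lfloor \frac{n+f}{2} \rfloor + 1 - f$ are from correct processes. Each such correct ECHO sender had to pass the read branch of $Valid$, requiring a prior BRB\_delivery of $(i, ``write", -, -, k, r)$; by the uniqueness property of reliable broadcast together with part 1, the delivered value must be $V_i$. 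All of these deliveries happen strictly before $t$, yielding the desired lower bound.

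The only delicate step I anticipate is the "first correct READY" argument in part 2, where I must rule out the amplification rule as the trigger for the earliest correct READY so that the $\lfloor \frac{n+f}{2} \rfloor + 1 - f$ bound actually applies; the rest is a mechanical unfolding of the $Valid$ function combined with the standard totality and uniqueness properties of Bracha's reliable broadcast.
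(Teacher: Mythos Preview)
Your proposal is correct and follows essentially the same approach as the paper's proof: both arguments chase the $Valid$ predicate backwards through the correct echoers of the round-$(r{+}1)$ slave write to force the round-$r$ write and read to have been committed, and then use the read's echo condition to count the correct deliverers of the round-$r$ write. Your part~2 is actually more careful than the paper's sketch, since you make explicit the ``first correct READY'' argument that pins the $\lfloor \frac{n+f}{2}\rfloor + 1$ ECHOs (and hence the $\lfloor \frac{n+f}{2}\rfloor + 1 - f$ correct write-deliveries) strictly before the commit time $t$; the paper leaves this timing step implicit. One small slip: when the $Valid$ function is instantiated at round $r{+}1$, the relevant variable is $RT_q^{r}[i]$, not $RT_q^{r-1}[i]$ (you already say ``round-$r$ read'', so this is just a notational off-by-one).
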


The above lemma guarantees that if a Byzantine process wants to introduce some values in the admissible values of a slave group of group $k$, then it must first complete its write step and then complete its read step. Enforcing this order guarantees that the last slave process (possibly Byzantine) which completes its write step at line 2 must be able to read all values committed by slave processes in its read step at line 3. This set of values are exactly the set of admissible values for the slave group, since a slave process can only commit a \quotes{write} message which contains the same value as its \quotes{write} message at previous round by the \textit{valid} function. Then, since this last slave process is a valid slave process, which is verified in the {\em valid} function, we have that the union of all admissible values for the slave group has size at most $k$. 

The following lemma shows that the classifier guarantees the properties we defined. 
\begin{restatable}{lemma}{clsLemma}
\label{lem:cls} 
Let $G$ be a group at round $r$ with label $k$. Let $L$ and $R$ be two nonnegative integers such that $L < k \leq R$. If $L < |V_i^{r}| \leq R$ for each correct process $i \in G$, and $|U_{k}^r| \leq R$, then \\
(p1) For each correct process $i \in master(G)$, $k < |V_i^{r + 1}| \leq R$\\
(p2) For each correct process $i \in slave(G)$, $L < |V_i^{r + 1}| \leq k$\\
(p3) $U_{s(k,r)}^{r + 1} \subseteq U_{k}^r$ \\
(p4) $U_{m(k,r)}^{r + 1} \subseteq U_{k}^r$ \\
(p5) $|U_{m(k,r)}^{r + 1}| \leq R$ \\
(p6) $|U_{s(k,r)}^{r + 1}| \leq k$\\
(p7) For each correct process $j \in master(G)$, $U_{s(k,r)}^{r + 1} \subseteq V_j^{r + 1}$   \\
(p8) Each correct process $i \in slave(G)$ can commits its value set at round $r + 1$, i.e., $V_i^{r + 1} \subseteq U_{s(k,r)}^{r + 1}$ \\
(p9) Each correct process $j \in master(G)$ can commit its value set at round $r + 1$, i.e., $V_j^{r + 1} \subseteq U_{m(k,r)}^{r + 1}$ \\
(p10) $|\cup \{V_i^{r + 1} ~| ~ i \in slave(G) \cap C\}| \leq k$ \\
(p11) $|\cup \{V_i^{r + 1} ~| ~ i \in master(G) \cap C\}| \leq R$ 
\end{restatable}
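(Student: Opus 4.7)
The plan is to establish the eleven properties in an order that layers each on already-proved facts, using direct unfolding of the \textit{valid} function together with quorum-intersection arguments anchored by Lemma \ref{lem:must_read_write}. I would first dispose of the inclusion properties (p3) and (p4). For the slave branch, condition 1 of \textit{valid} forces any value set committed at round $r+1$ with label $s(k,r)$ to coincide with a value set committed at round $r$ with label $k$, so $U_{s(k,r)}^{r+1} \subseteq U_k^r$. For the master branch, the check $v \subseteq S_i[m(k,r)]$ together with the fact that $S_i[m(k,r)]$ is populated only from round-$r$, label-$k$ BRB-deliveries yields $U_{m(k,r)}^{r+1} \subseteq U_k^r$. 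Property (p5) is immediate from (p4) and $|U_k^r| \leq R$.

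For (p1) and (p2), the observation is that a correct process $i$ finishes line 2 (waiting for $n-f$ wack acknowledgements) before issuing its line-3 read, so at least $n - 2f$ correct responders already have $V_i^r \subseteq ACV^r[k]$ when they send back their rack, forcing $V_i^r \subseteq T$ where $T = \bigcup_j RV_i^r[j]$. For a correct slave, $V_i^{r+1} = V_i^r$ so $|V_i^{r+1}| \leq |T| \leq k$; for a correct master, $V_i^{r+1} = T'$ and the wait-clause $T \subseteq ACV^r[k]$ in the mack handler guarantees $T' \supseteq T$, giving $|V_i^{r+1}| > k$. The upper bound $|V_i^{r+1}| \leq R$ comes from $T, T' \subseteq U_k^r$.

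The crux is (p6). Let $P$ be the set of processes whose round-$r+1$ slave writes commit, and for each $p \in P$ let $v_p$ be the committed value, so $U_{s(k,r)}^{r+1} = \bigcup_{p \in P} v_p$. Choose $b^* \in P$ to maximize the commit time $t_p$ of $p$'s round-$r$ read, setting $t^* = t_{b^*}$. Because $b^*$'s round-$r+1$ write commits, a set $C^*$ of at least $\lfloor (n+f)/2 \rfloor + 1 - f$ correct validators echoed it; condition 2 of \textit{valid} pins $pf[c] = RT_c^r[b^*] = ACV_c^r[k]$ at the moment $c \in C^*$ BRB-delivers $b^*$'s round-$r$ read, which is at time $\geq t^*$. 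For each $p \in P$, Lemma \ref{lem:must_read_write} provides a set $C_p$ of at least $\lfloor (n+f)/2 \rfloor + 1 - f$ correct processes that BRB-delivered $p$'s round-$r$ write by time $t_p \leq t^*$. Since $|C_p| + |C^*| > n - f$, any $c_p \in C_p \cap C^*$ satisfies $v_p \subseteq ACV_{c_p}^r[k] = pf[c_p]$; hence $\bigcup_p v_p \subseteq \bigcup_c pf[c]$, whose cardinality is bounded by $k$ via condition 3 of \textit{valid}.

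Property (p7) uses a parallel intersection: the set $C_p$ and the set $D_j$ of at least $n - 2f$ correct mack-responders to a correct master $j$ satisfy $|C_p| + |D_j| - (n - f) > 0$ under $f < n/5$, and the wait-clause $R \subseteq ACV_j^r[k]$ pulls $v_p$ into $T' = V_j^{r+1}$ through a chosen responder in the intersection. Properties (p8) and (p9) are liveness facts: a correct slave's unchanged value is already in $U_k^r$ and its own round-$r$ read constitutes a valid self-proof, so its round-$r+1$ write passes \textit{valid}; a correct master's $T'$ is built from already-delivered label-$k$ values that the validators accumulate in $S_i[m(k,r)]$, so the master's round-$r+1$ write also validates. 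Finally, (p10) follows by combining (p8) with (p6), and (p11) by combining (p9) with (p5). The principal obstacle is (p6): the choice of $b^*$ as the latest round-$r$ reader is precisely what lets Lemma \ref{lem:must_read_write}'s delivery guarantees for every $p \in P$ take effect before $b^*$'s proof entries become frozen by the correct validators.
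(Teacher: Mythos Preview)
Your overall plan matches the paper's proof, and the treatments of (p1)--(p6) and (p8)--(p11) are essentially the same; in particular your choice of $b^*$ in (p6) as the process with the latest round-$r$ read commit time is a slightly different but equally workable extremal witness compared to the paper's ``last write to reach $\lfloor (n+f)/2\rfloor+1-f$ correct deliveries''.

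The gap is in (p7). Quorum intersection between $C_p$ and the set $D_j$ of correct mack-responders only gives you a common correct process $c$; it does not tell you that $c$ has $v_p \subseteq ACV_c^r[k]$ \emph{at the moment it sends its mack to $j$}. The clause ``$R \subseteq ACV_j^r[k]$'' you invoke is the master's acceptance filter on incoming macks and has nothing to do with whether $v_p$ made it into $R_c$. The paper closes this with a temporal-ordering argument by contradiction, and crucially it intersects $D_j$ not with your $C_p$ but with the set of correct \emph{echoers of $p$'s round-$(r{+}1)$ slave write}: if such an echoer $c$ had processed $j$'s $master(T_j,k,r)$ before BRB-delivering $p$'s round-$r$ write, then (because $c$ must deliver $p$'s write before $p$'s read, by the read-echo rule) $c$ would set $RT_c^r[p]$ only after $T_j \subseteq ACV_c^r[k]$, forcing $T_j \subseteq pf_p[c]$; since $|T_j|>k$ this contradicts the check $|\bigcup_j pf_p[j]| \le k$ that $c$ itself performed when echoing. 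Your $C_p$ (early deliverers of the round-$r$ write) does not carry the $pf_p[c]=RT_c^r[p]$ information needed for this contradiction, so as written the argument for (p7) does not go through.
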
 
\begin{proof}
{\bf (p1)-(p2)} : Immediate from the classifier procedure. 

{\bf (p3):} A slave process can only commit the write message that it has reliably broadcast at previous round. Thus, $U_{s(k,r)}^{r + 1} \subseteq U_{k}^r$. 

{\bf (p4):} The safe value set of each correct process for group $m(k, r)$ is the union of values reliably broadcast by processes in group $k$ at round $r$. Thus, $U_{m(k,r)}^{r + 1} \subseteq U_k^{r}$.

{\bf (p5):} Immediate from $(p4)$ 

{\bf (p6)}: Consider group $s(k,r)$ at round $r + 1$. From Lemma \ref{lem:unique_label}, we know that this group must be the slave group of group $k$ at round $r$. Let $P$ denote the set of processes who commit a write message at round $r + 1$ with label $s(k,r)$. For each process $i \in P$, let $(i, ``write", pf_i, V_i, s(k, r), r + 1)$ denote the message that is committed by process $i$. Then, $U_{s(k,r)}^{r + 1} = \cup \{V_i ~ |~ i \in P\}$. 
%Let $t_0$ denote the physical time that message $(i, ``write", pf_i, V_i, s(k, r), r + 1)$ is reliably delivered by the first process.
From  part 2) of Lemma \ref{lem:must_read_write}, we have that for each process $i \in P$ the message $(i, ``write", -,V_i, k, r)$ must have been reliably delivered by at least $\lfloor \frac{n + f}{2} \rfloor + 1 - f$ correct processes. %Let $t_1$ denote the physical time that $(i,``read", -,-,k,r)$ is reliably delivered by the first process. Then message $(i, ``write", -, V_i, k, r)$ must have been reliably delivered by at least $\lfloor \frac{n + f}{2} \rfloor + 1 - f$ correct processes before time $t_1$. 
Let process $l \in P$ be the last process such that its write message $(l, ``write", -, V_l, k, r)$ is reliably delivered by at least $\lfloor \frac{n + f}{2} \rfloor + 1 - f$ different correct processes. Let $Q_l$ denote the set of correct processes which echoed message $(l, ``write", pf_l, V_l, s(k,r), r + 1)$. We have $|Q_l| \geq \lfloor \frac{n + f}{2} \rfloor + 1 - f$. By the condition of echoing a \quotes{write} message, we have $pf_l[q] = RT_q^r[l]$ for each $q \in Q_l$. %Then, each process $q \in Q_l$ must have reliably delivered message $(l, ``read", -, -, k, r)$. 

 Consider an arbitrary process $p \in P$. Let $Q_p$ denote the set of the first $\lfloor \frac{n + f}{2} \rfloor + 1 - f$ correct processes which reliably delivered message $(p, ``write", -, V_p, k, r)$. Since $2 (\lfloor \frac{n + f}{2} \rfloor + 1 - f) > n - f$, there exists a correct process $s \in Q_l \cap Q_p$. Let $t_0$ denote the time that process $s$ sets its $RT_s^r[l]$ as $ACV_s^{r}[k]$. This happens after $(l, ``read", -,-,k,r)$ is reliably delivered by process $t$. From part 2) Lemma of \ref{lem:must_read_write}, the message $(l, ``write", -, V_l, k, r)$ must be reliably delivered by at least $\lfloor \frac{n + f}{2} \rfloor + 1 - f$ correct processes before time $t_0$. Since $l$ is the last process in $P$ such that its write message  $(l, ``write", -, V_l, k, r)$ is reliably delivered by at least $\lfloor \frac{n + f}{2} \rfloor + 1 - f$ different correct processes, process $p$'s write message $(p, ``write", -, V_p,k ,r)$ must be reliably delivered by each process in $Q_p$ before time $t_0$. Hence, when process $s$ sets $RT_s^r[l]$ as $ACV_s^r[k]$, the value set $V_p$ has been added into $ACV_s^r[k]$ by process $s$. Thus, $V_p \subseteq RT_s^r[l]$. Since $p$ is an arbitrary process in $P$, we have that for each process $p \in P$, there exists a process $s \in Q_l$ such that $V_p \subseteq RT_s^r[l]$. Hence, $\cup \{V_p ~|~ p \in P\} \subseteq \cup \{RT_q^r[l] ~|~ q \in Q\} \subseteq \bigcup \limits_{j = 1}^{n} pf_l[j]$. Since $|\bigcup \limits_{j = 1}^{n} pf_l[j]| \leq k$, we have $|U_{s(k,r)}^{r + 1}| = |\cup \{V_p ~|~ p \in P\}| \leq k$. 

{\bf (p7)}: Consider group $s(k,r)$ at round $r + 1$. Let $P$ denote the set of processes who commit a \quotes{write} message at round $r + 1$ with label $s(k,r)$. For each process $i \in P$, let $(i, ``write", pf_i, V_i, s(k, r), r + 1)$ denote the message that is commitd by process $i$. Then, $U_{s(k,r)}^{r + 1} = \cup \{V_i ~ |~ i \in P\}$. We show that for each $i \in P$, $V_i \subseteq V_j^{r + 1}$. From Lemma \ref{lem:must_read_write}, we have the message $(i, ``write", -,V_i,k, r)$ must have been reliably delivered by at least $\lfloor \frac{n + f}{2} \rfloor + 1 - f$ correct processes. Let $P$ denote the set of correct processes which echoed for message $(i, ``write", pf_i, V_i, s(k, r), r + 1)$. We have $|P| \geq \lfloor \frac{n + f}{2} \rfloor + 1 - f$. Then, process $i$'s \quotes{read} message $(i, ``read", -, -,k, r)$ and write message $(i, ``write", -, V_i, k, r)$ must be reliably delivered by each process in $P$. Also, by the echoing condition, we have $pf_i[p] = ACV_p^r[i]$ for each $p \in P$ and $|\bigcup \limits_{j = 1}^{n} pf_i[j]| \leq k$. Thus $V_i \subseteq ACV_p^{r}[k]$ for each $p \in P$.\\
Let $Q$ denote the set of correct processes which delivered the message $master(T_j^r, k, r)$ sent by correct process $j \in master(G)$ at line 7 of round $r$ and we have $|Q| \geq n - 2f$. Since $\lfloor \frac{n + f}{2} \rfloor + 1 - f + n - 2f > n - f$, there exists a correct process $s  \in P \cap Q$ such that it reliably delivers $(i, ``write", -, V_i, k, r)$ of process $i$ and message $master(T_j^r, k, r)$ from process $j$. We show that process $s$ must deliver $(i, ``write", -, V_i, k, r)$ before $master(T_j^r, k, r)$. Suppose that process $t$ delivers $master(T_j^r, k, r)$ before $(i, ``write", -, V_i, k, r)$ for contradiction. Then, we have $T_j^r \subseteq ACV_t^r[k]$ when process $t$ delivers $(i, ``write", -, V_i, k, r)$. We also have that process $t$ must reliable deliver $(i, ``read",-,-,k,r)$ after $(i, ``write", -, V_i, k, r)$. Thus, $T_j^r \subseteq ACV_t^r[k]$ when $s$ delivers $(i, ``read",-,-,k,r)$. Since $ACV_s^r[k] = pf_i[s]$, we have $T_j^r \subseteq pf_i[s]$. Since process $j$ is correct and $j \in master(G)$, then $|T_j^r| > k$. Thus, $|\bigcup \limits_{j = 1}^{n} pf_i[j]| > k$, contradiction. Therefore, process $t$ delivers $master(T_j^r, k, r)$ before $(i, ``write", -, V_i, k, r)$. Then, when $j$ receives $mack(R_t,r)$ from $t$, we must have $V_i \subseteq R_t$. Thus, $V_i \subseteq V_j^{r + 1}$. 

{\bf (p8):} Since process $i$ is correct, at round $r$, it must read from at least $n - 2f$ correct processes. Let $Q$ denote this set of correct processes. Then, at round $r + 1$, each process in $Q$ will echo the reliable broadcast message of process $i$. Thus, there will be at least $n - 2f$ echo messages. Since $f < \frac{n}{5}$, we have $n - 2f \geq \lfloor \frac{n + f}{2} \rfloor + 1$. Hence, eventually the message of $i$ will be reliable delivered. 

{\bf (p9):} Consider round $r$, from line 8, we know that $V_j^{r + 1} \subseteq ACV_j^r[k]$. From the property of reliable broadcast, any value $v \in V_j^{r + 1}$ is reliably broadcast by some process and will eventually be reliably delivered by each correct process. Hence, value $v$ will be included into the safe value set of each correct process for the group with label $m(k, r)$. Thus, at round $r + 1$, $V_j^{r + 1}$ will be eventually reliable delivered by each correct process.  

{\bf (p10):} Implied by $(p8)$ and $(p6)$. 

{\bf (p11):} Implied by $(p9)$ and $(p5)$.
\end{proof}

The following lemma shows that the value set of a correct process is non-decreasing. 
\begin{lemma} \label{lem:non_dec}
For any correct process $i$ and round $r$, $V_i^r \subseteq V_j^{r + 1}$. 
\begin{proof}
A slave process keeps its value set unchanged and a master process updates its value set to be the set of reliably delivered values which contains its own value set. 
\end{proof}
\end{lemma}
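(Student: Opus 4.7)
The plan is to proceed by case analysis on the classification outcome of process $i$ at round $r$, reading off what $V_i^{r+1}$ is in each branch of the $Classifier$ procedure (assuming the natural correction that the statement is $V_i^r \subseteq V_i^{r+1}$, matching the proof sketch).

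First I would dispatch the slave case: if the correct process $i$ is classified as \textit{slave} at round $r$, then line 11 of the classifier returns the input value set unchanged, so $V_i^{r+1} = V_i^r$ and the inclusion is immediate.

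The master case is the substantive one. Here $V_i^{r+1}$ is the set $T'$ defined at line 8, which is the union of valid $mack$ payloads $R_j \subseteq ACV_i^r[k]$ received from at least $n-f$ different processes during the write-read step at line 7. The goal is therefore to show that $V_i^r$ is contained in each such payload coming from a correct responder, hence in $T'$. To do this I would trace the ordering of events in the asynchronous execution at a correct responder $j$ that sends $mack$ to $i$: before $j$ could BRB\_deliver $i$'s read message at line 3, the clause for \quotes{read} in the \textit{valid} function forced $j$ to have already BRB\_delivered $i$'s write message (of line 2), at which point the handler inserted $V_i^r$ into $ACV_j^r[k]$. Since $ACV_j^r[k]$ is monotone in time (values are only added, never removed), the later moment at which $j$ handles $i$'s $master(T,k,r)$ message and sends $mack(ACV_j^r[k], r)$ still satisfies $V_i^r \subseteq ACV_j^r[k] = R_j$. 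Taking the union over the at least $n - 2f \geq 1$ correct responders yields $V_i^r \subseteq T' = V_i^{r+1}$.

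The main subtlety I expect is the timing argument in the master case: one has to be careful that the validity condition for echoing a read actually propagates into the responder's $ACV$ before the $mack$ is emitted, and that additions to $ACV$ are never undone. Both follow directly from the code in Fig.~\ref{fig:classifier}, so once the ordering of BRB\_deliveries is nailed down, the lemma follows without further work.
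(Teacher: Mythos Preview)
Your case split and the slave branch match the paper's one-sentence proof. In the master branch the paper merely asserts that the updated set ``contains its own value set''; your attempt to justify this has two gaps. First, the \textit{valid} predicate in Fig.~\ref{fig:brb} gates only the ECHO step, not BRB\_delivery: a correct process $j$ delivers once it sees $2f{+}1$ READY messages, and READY can propagate via the $(f{+}1)$-READY rule without $j$ itself ever echoing. So $j$ BRB\_delivering $i$'s read does \emph{not} force $j$ to have BRB\_delivered $i$'s write. Second, the $j$ you fix is a $mack$ responder, and $mack$ is sent in reply to the plain (non-BRB) $master(T,k,r)$ message of line~7; nothing forces such a $j$ to have BRB\_delivered $i$'s read before that moment, so even the premise of your ordering chain is absent. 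For an arbitrary correct $mack$ responder you therefore have not established $V_i^r \subseteq ACV_j^r[k]$ at the time the $mack$ is emitted.

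What does work is a quorum-intersection count. Process $i$ leaves line~2 only after $n{-}f$ $wack$s, so before the $master$ message is ever sent at least $n{-}2f$ correct processes have BRB\_delivered $i$'s write and hence already hold $V_i^r$ in their $ACV^r[k]$. At most $2f$ processes lie outside this set, so among the $n{-}f$ accepted $mack$s at least $n{-}3f\geq 1$ (using $f<n/5$) come from such processes; any one of their payloads $R_j=ACV_j^r[k]$ then contains $V_i^r$, giving $V_i^r\subseteq T'=V_i^{r+1}$.
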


The following lemma is used later to show that processes in the same group at the end of the algorithm must have the same set of values. A similar lemma is given in \cite{zheng2018lattice, zheng2018linearizable}. We defer its detailed proof in Appendix \ref{app:bound_lemma}. 

\begin{restatable}{lemma}{boundLemma} \label{lem:bound}
Let $G$ be a group of processes at round $r$ with label $k$. Then \\
(1) for each correct process $i \in G$, $k - \frac{f}{2^r} \leq |V_i^{r}| \leq k + \frac{f}{2^r}$ \\
(2) $|U_{k}^{r}| \leq k + \frac{f}{2^r}$
\end{restatable}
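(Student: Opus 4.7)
The plan is to prove both parts simultaneously by induction on the round number $r$, with Lemma \ref{lem:cls} doing essentially all the work in the inductive step.

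For the base case $r = 1$, only the initial group with label $k_0 = n - f/2$ exists, and the claim reduces to $n - f \leq |V_i^{1}| \leq n$ for each correct $i$ and $|U_{k_0}^{1}| \leq n$. The lower bound on $|V_i^{1}|$ is immediate: a correct process sets $V_i^{1}$ to be the set of values RB\_delivered after waiting for $n-f$ such deliveries in line 1--2 of the main algorithm. The upper bound on both $|V_i^{1}|$ and $|U_{k_0}^{1}|$ follows from the uniqueness property of reliable broadcast: each of the $n$ processes can cause at most one value to be RB\_delivered with label $k_0$, so at most $n$ distinct values can appear in the initial safe value set. (I would note that values are tagged with sender ids to make them distinct, as already mentioned after the initial round.)

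For the inductive step, assume the claim holds at round $r$ for the group $G$ with label $k$; I would apply Lemma \ref{lem:cls} with $L := k - f/2^{r}$ and $R := k + f/2^{r}$, which satisfies $L < k \leq R$ for $r \leq \log f$. The induction hypothesis says exactly $L < |V_i^{r}| \leq R$ for each correct $i \in G$ and $|U_k^r| \leq R$, so the hypotheses of Lemma \ref{lem:cls} are met. For the master subgroup with label $m(k,r) = k + f/2^{r+1}$, properties (p1) and (p5) give $k < |V_i^{r+1}| \leq R = m(k,r) + f/2^{r+1}$ and $|U_{m(k,r)}^{r+1}| \leq R = m(k,r) + f/2^{r+1}$; rewriting $k = m(k,r) - f/2^{r+1}$ yields exactly the bounds required at round $r+1$. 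For the slave subgroup with label $s(k,r) = k - f/2^{r+1}$, properties (p2) and (p6) similarly give $L = s(k,r) - f/2^{r+1} < |V_i^{r+1}| \leq k = s(k,r) + f/2^{r+1}$ and $|U_{s(k,r)}^{r+1}| \leq k = s(k,r) + f/2^{r+1}$, which is again what is needed.

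The only real obstacle is getting the base case bookkeeping right, in particular arguing the $|U_{k_0}^{1}| \leq n$ bound from the reliable-broadcast uniqueness and making sure the identification of admissible values with committed write-messages is consistent with how $V_i^{1}$ is defined in line 2 of the main algorithm. Once that is in place, the inductive step is a direct algebraic substitution into Lemma \ref{lem:cls}, so the whole proof is essentially a single page of verification.
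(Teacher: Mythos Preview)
Your proposal is correct and follows essentially the same approach as the paper's own proof: induction on the round number, with the base case handled by the $n-f$ wait and the uniqueness guarantee of reliable broadcast, and the inductive step obtained by instantiating Lemma~\ref{lem:cls} with $L = k - f/2^{r}$, $R = k + f/2^{r}$ and reading off (p1), (p5) for the master subgroup and (p2), (p6) for the slave subgroup. The only cosmetic difference is that the paper phrases the induction by looking from a group $G$ at round $r$ back to its parent $G'$ at round $r-1$, whereas you go forward from $G$ at round $r$ to its two children at round $r+1$; the algebra is identical.
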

\begin{proof}
By induction on round number $r$ and apply $(p1)$, $(p2)$, $(p5)$ and $p(6)$ of Lemma \ref{lem:cls}. 
\end{proof}

\begin{lemma}\label{lem:same_group}
Let $i$ and $j$ be two correct processes that are within the same group $G$ with label $k$ at the beginning of round $\log f + 1$. Then $V_i^{\log f + 1}$ and $V_j^{\log f + 1}$ are equal.
\begin{proof}
Let $G'$ be the parent of $G$ with label $k'$. Assume without loss of generality that $G = M(G')$. The proof for the case $G = S(G')$ follows in the same manner. Since $G'$ is a group at round $\log f$, by Lemma \ref{lem:bound}, we have: \\
 (1) for each correct process $p \in G'$, $k' - 1 < |V_p^{\log f}| \leq k' + 1$, and\\
 (2) $|U_{k'}^{\log f}| \leq k' + 1$

 Since $i \in G'$ and $j \in G'$, (1) and (2) hold for both process $i$ and $j$. By the assumption that $G = M(G')$, process $i$ and $j$ execute the \textit{Classifier} procedure with label $k'$ and are both classified as \textit{master}. Let $L = k' - 1$ and $R = k' + 1$, then by applying Lemma \ref{lem:cls}($p1$) we have $k' < |V_i^{\log f + 1}| \leq k' + 1$ and $k' < |V_j^{\log f + 1}| \leq k' + 1$, thus $|V_i^{\log f + 1}| = |V_j^{\log f + 1}| = k' + 1$. By ($p11$) of Lemma \ref{lem:cls}, we have $|\cup \{V_i^{\log f + 1}, V_j^{\log f + 1}\}| \leq k' + 1$. Thus, $V_i^{\log f + 1} = V_j^{\log f + 1}$. Therefore, $V_i^{r}$ and $V_j^{r}$ are equal at the beginning of round $\log f + 1$.
\end{proof}
\end{lemma}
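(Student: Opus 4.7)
The plan is to proceed by examining the parent group of $G$ in the classification tree and invoking the structural properties established in Lemma \ref{lem:cls} together with the cardinality bounds from Lemma \ref{lem:bound}. Let $G'$ denote the parent of $G$, with label $k'$, so $G$ is either the slave subgroup $slave(G')$ or the master subgroup $master(G')$ of $G'$. Since $G'$ is a group at round $\log f$, the bounds from Lemma \ref{lem:bound} specialize with $\frac{f}{2^{\log f}} = 1$, giving $k' - 1 \leq |V_p^{\log f}| \leq k' + 1$ for every correct process $p \in G'$ and $|U_{k'}^{\log f}| \leq k' + 1$.

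Next I would set $L := k' - 1$ and $R := k' + 1$ and feed these into Lemma \ref{lem:cls} applied at round $r = \log f$. Assume first that $G = master(G')$ (the other case is symmetric). Then property $(p1)$ gives $k' < |V_i^{\log f + 1}| \leq k' + 1$ and the same bound for $j$, so both sizes equal exactly $k' + 1$. Meanwhile, property $(p11)$ bounds the size of the union of value sets of correct master processes by $R = k' + 1$. Since both $V_i^{\log f + 1}$ and $V_j^{\log f + 1}$ have size $k' + 1$ and their union has size at most $k' + 1$, the two sets must coincide. The slave case uses $(p2)$ to pin $|V_i^{\log f + 1}| = |V_j^{\log f + 1}| = k'$, together with $(p10)$ bounding the union by $k'$, yielding the same conclusion.

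The main thing to be careful about is verifying that the hypotheses of Lemma \ref{lem:cls} are indeed met at round $\log f$ — in particular that the inequalities $L < |V_p^{\log f}| \leq R$ hold strictly on the left. This is where the choice $L = k' - 1$ is tight: one needs the lower bound from Lemma \ref{lem:bound} to actually be strict, or else to rely on the labeling scheme guaranteeing integer gaps of at least $1$ at the last level of the classification tree. I do not anticipate any further obstacle beyond this bookkeeping; the whole proof is essentially a pigeonhole argument on the sizes, leveraging the classifier's guarantee that the union of value sets in any subgroup cannot exceed the subgroup's label.
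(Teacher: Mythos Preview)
Your proposal is correct and follows the paper's proof almost verbatim: both pass to the parent group $G'$ with label $k'$, apply Lemma~\ref{lem:bound} at round $\log f$ to obtain the bounds $k'-1$ and $k'+1$, set $L=k'-1$, $R=k'+1$, and then invoke $(p1)$ (resp.\ $(p2)$) together with $(p11)$ (resp.\ $(p10)$) of Lemma~\ref{lem:cls} for the size-pigeonhole conclusion. The paper treats only the master case explicitly and simply asserts the strict lower bound you were careful to flag; otherwise the arguments coincide.
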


\begin{lemma}\label{lem:comp_logf}
(Comparability) For any two correct process $i$ and $j$, $y_i$ and $y_j$ are comparable.
\begin{proof}
If process $i$ and $j$ are in the same group at the beginning of round $\log f + 1$, then by Lemma \ref{lem:same_group}, $y_i = y_j$. 
Otherwise, let $G$ be the last group that both $i$ and $j$ belong to. Suppose $G$ is a group with label $k$ at round $r$. Suppose $i \in slave(G)$ and $j \in master(G)$ without loss of generality. Then, $V_i^{\log f + 1} \subseteq U_{s(k,r)}^{r + 1} \subseteq V_j^{r + 1} \subseteq V_j^{\log f + 1}$, by ($p8$), ($p6$) ($p7$) and ($p5$) of Lemma \ref{lem:cls} and Lemma \ref{lem:non_dec}. 
\end{proof}
\end{lemma}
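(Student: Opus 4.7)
\textbf{Proof plan for Lemma \ref{lem:comp_logf}.} The plan is to split on whether $i$ and $j$ end up in the same group at the beginning of round $\log f + 1$. In the easy case, where $i$ and $j$ share a final group, I would invoke Lemma \ref{lem:same_group} directly: it gives $V_i^{\log f + 1} = V_j^{\log f + 1}$, hence $y_i = y_j$ since $y_i$ and $y_j$ are just the joins of those equal sets.

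In the harder case, $i$ and $j$ diverge at some round. Following the classification tree down from the root, let $G$ be the deepest group that contains both $i$ and $j$; say $G$ has label $k$ at round $r$, so at round $r$ one of them is classified as \emph{slave} and the other as \emph{master}. Without loss of generality, $i \in slave(G)$ and $j \in master(G)$, so $i$'s label becomes $s(k,r)$ and $j$'s label becomes $m(k,r)$. My goal is to show $V_i^{\log f + 1} \subseteq V_j^{\log f + 1}$, which yields $y_i \le y_j$.

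The chain I would use is
\[
V_i^{\log f + 1} \;\subseteq\; U_{s(k,r)}^{r+1} \;\subseteq\; V_j^{r+1} \;\subseteq\; V_j^{\log f + 1}.
\]
The middle inclusion is exactly property (p7) of Lemma \ref{lem:cls}. The right inclusion is Lemma \ref{lem:non_dec} applied $\log f - r$ times to the correct master $j$. The left inclusion is the main obstacle and is what I would concentrate on: at round $r+1$ itself, property (p8) gives $V_i^{r+1} \subseteq U_{s(k,r)}^{r+1}$, but $V_i$ may be updated in rounds $r+2, \ldots, \log f$ as $i$ descends further through the classification tree, possibly being classified as master in later rounds where it can grow its value set.

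To close that last gap I would argue by a short induction on the round number $r' \ge r+1$ that $V_i^{r'}$ is always contained in $U_{s(k,r)}^{r+1}$. The inductive step uses properties (p3) and (p4) of Lemma \ref{lem:cls}: whichever subgroup $i$ passes into at round $r'$, the admissible values of that subgroup are contained in the admissible values of the parent group at round $r'$, so by induction they are contained in $U_{s(k,r)}^{r+1}$. Since $V_i^{r'+1}$ is always a subset of the admissible values of $i$'s current group (by the BRB\_deliver bookkeeping in the classifier, which ensures each value a correct process holds was reliably delivered with that group's label), the containment propagates. With the chain established, $V_i^{\log f + 1} \subseteq V_j^{\log f + 1}$, so taking joins gives $y_i \le y_j$ and completes the comparability claim.
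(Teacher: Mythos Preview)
Your proof is correct and follows the same structure as the paper's: the same case split on whether $i$ and $j$ share a final group, and in the divergent case the same inclusion chain $V_i^{\log f+1}\subseteq U_{s(k,r)}^{r+1}\subseteq V_j^{r+1}\subseteq V_j^{\log f+1}$, with (p7) for the middle step and Lemma~\ref{lem:non_dec} for the right step. Your handling of the left inclusion is actually more careful than the paper's terse version: the paper simply cites (p8), (p6), (p7), (p5), but (p5) and (p6) are only size bounds and do not by themselves give $V_i^{\log f+1}\subseteq U_{s(k,r)}^{r+1}$; your explicit induction down the classification tree using (p3)/(p4) together with (p8)/(p9) is the right way to close that gap.
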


\logTheoremAsync*
\begin{proof}
{\bf Downward-Validity}. After the initial round, $x_i \in V_i^1$. By Lemma \ref{lem:non_dec}, we have $V_i^1 \subseteq V_i^{\log f + 1}$. Thus $x_i \in V_i^{\log f + 1}$. Since  $y_i = \sqcup \{v \in V_i^{\log f + 1}\}$, we have $x_i \leq y_i$.  

{\bf Comparability} follows from Lemma \ref{lem:comp_logf}. 

{\bf Upward-Validity}. In the initial round, reliable broadcast is used to construct the safe value set for the initial group with label $k_0$. By property of reliable broadcast, each Byzantine process can introduce at most one value into the safe value set for group $k_0$. After the initial round, all admissible values for each group must be subset of the values reliably delivered at the initial round. Thus, the union of all value sets held by correct processes must subset of the union of $\{x_i ~|~ i \in C\}$ and a set $B \subset X$ such that $|B| \leq f$. Therefore, $\sqcup \{y_i ~ | ~ i \in C\} \leq \sqcup(\{x_i ~ | ~ i \in C\} \cup B)$, where $B \subset X $ and $|B| \leq t$.
\end{proof}

\begin{remark}
The reliable broadcast primitive we use can be replaced by a more efficient one proposed by Imbs et al~\cite{imbs2016trading}, which can only tolerate $f < \frac{n}{5}$ Byzantine failures but takes 2 asynchronous communication rounds. This suffices for our application. 
\end{remark}

\section{$O(\log f)$ Rounds Algorithm for Authenticated BLA Problem} 
In this section, we present an $O(\log f)$ rounds algorithm for the BLA problem in authenticated (i.e., assuming digital signatures and public-key infrastructure) setting that can tolerate $f < \frac{n}{3}$ Byzantine failures by modifying the Byzantine tolerant classifier procedure in previous section. The Byzantine classifier procedure in authenticated setting is shown in Fig. \ref{fig:auth_classifier}. The primary difference lies in what a process does when it reliably delivers some message and the validity condition for echoing a broadcast message. The basic idea is to let a process sign the $ack$ message that it needs to send. Each process uses the set of signed $ack$ messages as proof of its completion of a write step or read step. In this section, we use $\langle x \rangle _i$ to denote a message $x$ signed by process $i$, i.e., $\langle x \rangle _i = \langle x, \sigma \rangle $, where $\sigma$ is the signature produced by process $i$ using its private signing key. We say a message is correctly signed by process $i$ if the signature within the message is a correct signature produced by process $i$. 

\subsection{The Authenticated Byzantine Tolerant Classifier}
The classifier in the authenticated setting is shown in Fig. \ref{fig:auth_classifier}. The primary difference between the classifier in previous section and the authenticated classifier is that in the authenticated classifier each process uses signed messages as proof of its group identity. 

\begin{figure}[htbp]
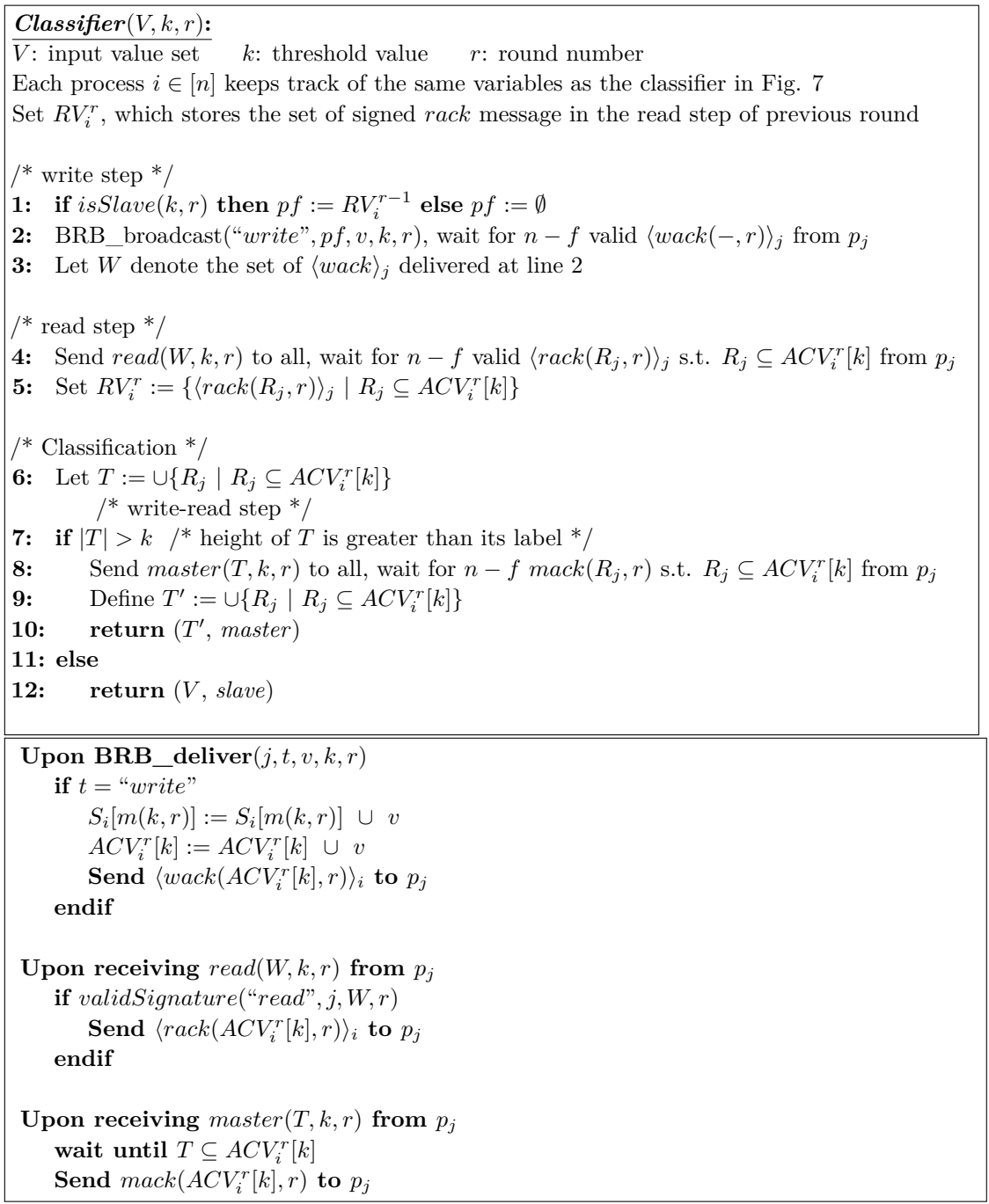
 
%\begin{centering}
\fbox{\begin{minipage}[t]  {5.6in}
\noindent
\underline{{\bf \textit{Classifier}$(V, k, r)$:}} \\
$V$: input value set \h $k$: threshold value \h $r$: round number\\
Each process $i \in [n]$ keeps track of the same variables as the classifier in Fig. \ref{fig:classifier}\\
Set $RV_i^r$, which stores the set of signed $rack$ message in the read step of previous round \\  

/* write step */\\
{\bf 1: } {\bf if} $isSlave(k, r)$ {\bf then} $pf := RV_i^{r - 1}$ {\bf else} $pf := \emptyset$ \\
{\bf 2: } BRB\_broadcast$(``write", pf, v, k, r)$, wait for $n - f$ valid $\langle wack(-,r) \rangle _j$ from $p_j$\\
{\bf 3: } Let $W$ denote the set of $\langle wack \rangle _j$ delivered at line 2\\

/* read step */\\
{\bf 4: } Send $read(W, k, r)$ to all, wait for $n - f$ valid $\langle rack(R_j, r) \rangle _j$ s.t. $R_j \subseteq ACV_i^r[k]$ from $p_j$ \\
{\bf 5: } Set $RV_i^r := \{\langle rack(R_j, r) \rangle _j ~| ~ R_j \subseteq ACV_i^r[k]\}$\\

/* Classification */\\
{\bf 6: } Let $T := \cup \{R_j ~| ~ R_j \subseteq ACV_i^r[k]\}$ \\
\h\h ~ /* write-read step */ \\
{\bf 7: } {\bf if} {$|T| > k$} ~/* height of $T$ is greater than its label */ \\
{\bf 8: } \h Send $master(T, k, r)$ to all, wait for $n- f$ $mack(R_j, r)$ s.t. $R_j \subseteq ACV_i^r[k]$ from $p_j$\\
{\bf 9: } \h Define $T' := \cup \{R_j ~| ~ R_j \subseteq ACV_i^r[k]\}$ \\
{\bf 10:} \h {\bf return} ($T'$, \textit{master})\\
{\bf 11:} {\bf else}\\
{\bf 12:} \h {\bf return} ($V$, \textit{slave})\\
\end{minipage}
}

\fbox{
\begin{minipage}[t] {5.6in}
\bf{Upon} BRB\_deliver$(j, t, v, k, r)$\\
\h {\bf if} $t =``write"$ \\
\h\h $S_i[m(k, r)] := S_i[m(k, r)] ~\cup ~v$ \\
\h\h $ACV_i^{r}[k] := ACV_i^{r}[k] ~\cup~ v$\\
\h\h Send $\langle wack(ACV_i^r[k], r) \rangle _i$ to $p_j$\\
\h {\bf endif} \\

{\bf Upon} receiving $read(W, k, r)$ from $p_j$ \\
\h {\bf if} $validSignature(``read", j, W, r)$ \\ 
\h\h Send $\langle rack(ACV_i^{r}[k], r) \rangle _i$ to $p_j$\\ 
\h {\bf endif} \\

{\bf Upon} receiving $master(T, k, r)$ from $p_j$ \\
\h {\bf wait until} $T \subseteq ACV_i^{r}[k]$ \\
\h Send $mack(ACV_i^{r}[k], r)$ to $p_j$
\end{minipage} 
} % end \fbox
\caption{\textit{The Authenticated Byzantine Tolerant Classifier} \label{fig:auth_classifier}}
\end{figure}

At lines 1-2, each process writes its current value set by using the {\bf BRB\_broadcast} procedure to send a \quotes{write} message. If the process is a slave process, it also includes the set of at least $n - f$ signed $rack$ messages it received at the previous round as a proof that it is indeed classified as a slave. At line 2, each process waits for correctly signed $wack$ message from at least $n- f$ different processes. This set of signed $wack$ message is used as the proof of its completion of the write step when this process tries to read from other processes. When a process BRB\_delivers a \quotes{write} message, it performs similar steps as the algorithm in previous section except that it sends a signed $wack$ message back. 

At line 4-5, each process reads from at least $n - f$ processes. Different from the classifier procedure in previous section, each process directly sends a read message along with the set of correctly signed $wack$ messages obtained at line 2 to all (instead of using the {\bf BRB\_broadcast} procedure). When a process receives a \quotes{read} message with label $k$ for round $r$, if uses the \textit{validSignature} function to check whether the \quotes{read} message contains correctly signed $wack$ message for round $r$ from at least $n - f$ different processes. If so, it sends back to the sender a signed $rack$ message along with the reliably delivered values with label $k$ at round $r$. This ensures that if a process (possibly Byzantine) tries to read from correct processes, it must complete its write step first. 

The classification step from line 6-12 is the same as the classification step of the algorithm in previous section. A mater process performs a write-read step by sending a $master$ message along the set of value obtained at line 6. Then it waits for $n - f$ valid $mack$ messages and updates its value set to be the set of values contained in these messages. When a process receives a $master$ message, it performs the same steps as in the classifier in previous section.  

The \textit{valid} function is different from the one given in previous section. First, only \quotes{write} messages are reliably broadcast. Second, the proof is a set of signed $rack$ messages instead of an array in previous section. To verify the proof, the valid function invokes the \textit{validSignature} function to check whether the proof contains correctly signed $rack$ message for previous round from at least $n - f$ different processes.

\begin{figure}[htb] 
\fbox{\begin{minipage}[t]  {5.4in}
\noindent

{\bf function } $Valid(j, type, pf, v, k, r)$:\\ 
\h {\bf if} $(type = ``write" \wedge \lnot isSlave(j, k, r) \wedge v \subseteq S_i[k])$ \\
\h\h $\vee~ (type = ``write" \wedge isSlave(j, k,r) \wedge \text{BRB\_deliver}(j, ``write", -, v, LB_i^{r - 1}[j], r - 1)$ \\
\h\h\h\h $~\wedge validSignature(``write", pf, r) \wedge pf ~\text{contains at most $k$ distinct values}$\\
\h\h {\bf return} $True$\\
\h {\bf else}\\
\h \h {\bf return} $False$ \\
\h {\bf endif} \\

{\bf function} $validSignature(type, pf, r)$: \\
\h {\bf if} ($type = ``write" \wedge pf$ contains correctly signed $rack(-,r - 1)$ from $n - f$ processes) \\
\h \h $\vee ~$ ($type = ``read" \wedge pf$ contains correctly signed $wack(-,r)$ from $n - f$ processes) \\
\h\h {\bf return} $True$ \\
\h {\bf else} \\
\h\h {\bf return} $False$ \\
\h {\bf endif}
\end{minipage}
} % end \fbox
\caption{\textit{The Valid Function} \label{fig:classifier}}
\end{figure}

For the proof of correctness, we just need to prove the classifier procedure satisfies the properties given Lemma \ref{lem:cls} under the assumption that $f < \frac{n}{3}$. The proof of $(p6)$ and $(p7)$ is similar to the proof in previous section. Thus, we do not give the formal mathematical proof. 
%\clsLemma* 
\begin{lemma}
Properties $(p1)-(p11)$ of Lemma \ref{lem:cls} hold for the classifier shown in Fig. \ref{fig:auth_classifier}.

\begin{proof}
{\bf (p1)-(p5)} : similar to the proofs given in Lemma \ref{lem:cls}.  

{\bf (p6)}: Consider group $s(k,r)$ at round $r + 1$. Let $Q$ denote the set of processes who commit their \quotes{write} message with label $s(k,r)$ at round $r + 1$. For process $i$ (possibly Byzantine) to commit this message, this message has to be echoed by at least $\lfloor \frac{n + f}{2} \rfloor + 1$ different processes. By the condition of echo, process $i$ must have received correctly signed $rack(-,r)$ message from at least $n - f$ processes at line 4 of round $r$, among which there are at least $n - 2f$ correct processes. When a process tries to read at round $r$, it has to prove that it has received at least $n - f$ correctly signed $wack(-,r)$ messages. Thus, process $i$ must have committed a write message and this message has been reliably delivered by at least $n - 2f$ correct processes.  Let process $l \in Q$ be the last process in $Q$ such that its write message is reliably delivered by at least $n - 2f$ different correct processes. When process $l$ read from at least $n - 2f$ different correct processes at round $r$, each process in $Q$ must have written its value in at least $n - 2f$ different correct processes. Then, since $2 (n - 2f) > n - f$ by the assumption of $f < \frac{n}{3}$, process $l$ must have read all values reliably broadcast by processes in $Q$. These values are exactly the set $U_{s(k,r)}^{r + 1}$. Since $l$ is a slave process and the size of $RV_l^r$ must be at most $k$, we have $|U_{s(k,r)}^{r + 1}| \leq k$. 

{\bf (p7)}: Consider group $s(k,r)$ at round $r + 1$. For a process $i \in slave(G)$ to commit its value set at round $r + 1$, process $i$ must reliably broadcast the same value set as round $r$ and must be able to verify the value set $RV_i^r$ it reads at round $r$. By a similar argument as the proof for $(p6)$, process $i$'s value set must be reliably delivered by at least $n - 2f$ correct processes at round $r$ before its reading step. Let $P$ denote the set of correct processes which delivered process $i$'s value before its reading step at round $r$ and $|P| \geq n - 2f$. Let $Q$ denote the set of correct processes which delivered the $master$ message sent by process $j \in master(G)$ at line 11 of round $r$ and $|Q| \geq n - 2f$. Then, since $2 (n - 2f) > n - f$ by the assumption of $f < \frac{n}{3}$, there exists a correct process $k  \in P \cap Q$ such that it delivered both $V_i^r$ and $T_j^r$ at round $r$. Since process $i$ is a slave process, process $k$ must send $rack$ to process $i$ before sending $mack$ to process $j$, otherwise, process $i$ would be classified as master. Then, process $j$ must receive $V_j^r$ from process $k$ and include into $V_j^{r + 1}$. Thus, $U_{s(k,r)}^{r + 1} \subseteq V_j^{r + 1}$.  

{\bf (p8):} Since process $i$ is correct, at round $r$, it must have received at least $n - f$ correctly signed $rack$ message. Then, at round $r + 1$, it can prove to at least $\lfloor \frac{n + f}{2} \rfloor + 1$ correct processes. Hence, the write message of $i$ at round $r + 1$ will be reliably delivered by each correct process. 

{\bf (p9):} Consider round $r$, from line 8, we know that $V_j^{r + 1} \subseteq ACV_j^r[k]$. From the property of reliable broadcast, any value $v \in V_j^{r + 1}$ is reliably broadcast by some process with label $k$ at round $r$ and will eventually be reliably delivered by each correct process. Hence, value $v$ will be included into the safe value set of each correct process for the group $m(k, r)$. Then, eventually $V_j^{r + 1} \subseteq S_i[m(k,r)]$ for each correct process $i$. Thus, when process $j \in master(G)$ tries to reliable broadcast $V_j^{r + 1}$ at round $r + 1$, the echo condition eventually holds for each correct process $i$.
\end{proof}
\end{lemma}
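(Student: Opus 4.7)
The plan is to walk through all eleven properties in the same order as in Lemma~\ref{lem:cls}, keeping the proof skeleton identical and replacing only the parts that used the BRB-based identity proof with arguments about the signed $wack$/$rack$ quorums. Properties (p1)--(p5) need essentially no change: (p1) and (p2) are read off the classification branch at lines~7--12; (p3) holds because the authenticated $valid$ function still forces a slave's round-$(r+1)$ write to carry the same value it committed in round $r$; and (p4)--(p5) follow because the safe-value map $S_i[m(k,r)]$ continues to be built only from values delivered inside group $k$, so $U_{m(k,r)}^{r+1}\subseteq U_k^r$.

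The bulk of the work is (p6), and this is where $f<n/3$ is actually used. I would let $Q$ be the set of (possibly Byzantine) processes that commit a round-$(r+1)$ write with the slave label $s(k,r)$, and let $V_p$ be the value set attached to $p$'s committed write. The authenticated $valid$ function requires each such $p$ to exhibit $n-f$ correctly signed $rack(-,r)$s, of which at least $n-2f$ are from correct processes; by the $validSignature$ check at the \quotes{read} handler, each of those correct processes must itself have received $n-f$ signed $wack$s from $p$, so $p$'s round-$r$ write was BRB-delivered by at least $n-2f$ correct processes. I would then pick $\ell\in Q$ last in time to hit this $n-2f$ threshold. Because $2(n-2f)>n-f$ precisely when $f<n/3$, the set of correct processes whose signed $rack$s end up inside $\ell$'s proof intersects, for every other $p\in Q$, the set of correct processes that had already delivered $p$'s round-$r$ write. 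Consequently $V_p$ lies in the $ACV$ that these common correct processes used when signing $\ell$'s $rack$, so $V_p$ is accounted for among the $\leq k$ distinct values the $valid$ check allows inside $\ell$'s proof; summing over $p\in Q$ gives $|U_{s(k,r)}^{r+1}|\le k$.

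For (p7), I would invoke the same intersection principle with a different pair of quorums. Fix a correct master $j\in master(G)$ and $p\in Q$; both the $n-f$ processes whose $mack$ $j$ collects and the $n-2f$ correct processes that delivered $p$'s round-$r$ write before $p$'s read phase intersect in some correct process $s$ (again using $f<n/3$). I would then follow the structure of the unauthenticated (p7) proof almost verbatim: rule out the ``$s$ answers $j$'s $master$ message before delivering $p$'s write'' ordering by exploiting the slave-side cap $|pf|\le k$ together with $|T_j^r|>k$, concluding that $s$ must have included $V_p$ in the $ACV$ it reports inside its $mack$ to $j$, hence $V_p\subseteq V_j^{r+1}$.

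Finally (p8)--(p11) are bookkeeping: (p8) is immediate once one notices that any correct slave already holds $n-f$ correctly signed $rack$s from round $r$, so its authenticated round-$(r+1)$ write satisfies the $valid$ predicate at every correct process and is reliably delivered; (p9) follows from the fact that $V_j^{r+1}\subseteq ACV_j^r[k]$ and reliable broadcast propagates every such value into the safe value map $S_i[m(k,r)]$ of every correct $i$; (p10) and (p11) combine these liveness properties with (p6) and (p5) respectively. The step I expect to be the main obstacle is (p6): the tricky point is not the counting but making sure the ``last slave'' $\ell$ really is well-defined among Byzantine members of $Q$ and that the intersection argument uses the correct quorum size ($n-f$ rather than $\lfloor(n+f)/2\rfloor+1-f$ of the unauthenticated version), since the signature mechanism obscures the temporal ordering that BRB made explicit in the earlier proof.
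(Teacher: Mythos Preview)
Your proposal is correct and follows essentially the same approach as the paper: (p1)--(p5) are deferred to the unauthenticated argument; (p6) uses the ``last slave'' $\ell\in Q$ together with the quorum-intersection bound $2(n-2f)>n-f$ under $f<n/3$; (p7) uses the same intersection plus an ordering contradiction; and (p8)--(p11) are bookkeeping from these plus (p5)/(p6). The only cosmetic difference is that the paper phrases the (p7) ordering contradiction as ``otherwise $i$ would be classified as master'' rather than explicitly invoking $|T_j^r|>k$ against the $\leq k$ cap in the proof, but this is the same argument.
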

%Once we have the above lemma, the correctness follows the same as the proof in previous section. 
\logTheoremAsyncAuth*

\section{Conclusion}

In this paper, we present an $O(\log f)$ rounds algorithm for the Byzantine lattice agreement problem in asynchronous systems which can tolerates $f < \frac{n}{5}$ Byzantine failures. We also give an $O(\log f)$ rounds algorithm for the authenticated setting that can tolerate $f < \frac{n}{3}$ Byzantine failures. One open problem left is to design an algorithm which has resilience of $f < \frac{n}{3}$ and takes $O(\log f)$ rounds.

\bibliography{ref}

\appendix 

\section{Proof of Lemma \ref{lem:bound}} \label{app:bound_lemma}
\boundLemma* 
\begin{proof}
By induction on $r$. Consider the base case with $r = 1$, $k = k_0 = n - \frac{f}{2}$. After the initial round, each process must receive at least $n - f$ different values and at most $n$ values by the property of reliable broadcast. Thus, $k_0 - \frac{f}{2} = n - f \leq |V_i^1| \leq n$. Any value in $U_k^r = U_{k_0}^1$ must be reliably delivered by at least one correct process at the initial round. Thus, $|U_{k_0}^1| \leq k_0 + \frac{f}{2} = n$. 

For the induction step, assume the above lemma holds for all groups at round $r - 1$. Consider an arbitrary group $G$ at round $r > 1$ with label $k$. Let $G'$ be the parent group of $G$ at round $r - 1$ with label $k'$. Consider the \textit{Classifier} procedure executed by all processes in $G'$ with label $k'$. By induction hypothesis, we have: 

(1) for each correct process $i \in G'$, $k' - \frac{f}{2^{r - 1}} < |V_i^{r - 1}| \leq k' + \frac{f}{2^{r - 1}}$ 

(2) $|U_{k'}^{r - 1}| \leq k' + \frac{f}{2^{r - 1}}$.

 Let $L = k' - \frac{f}{2^{r - 1}}$ and $R = k' + \frac{f}{2^{r - 1}}$, then (1) and (2) are exactly the conditions of Lemma \ref{lem:cls}. Consider the following two cases: 

 Case 1: $G = M(G')$. Then $k = k' + \frac{f}{2^r}$. From ($p1$) and ($p5$) of Lemma \ref{lem:cls}, we have:
 
 (1) for each correct process $i \in G$, $k - \frac{f}{2^r} < |V_i^r| \leq k + \frac{f}{2^r}$ 
 
 (2) $|U_{k}^r| \leq k + \frac{f}{2^r}$. 

 Case 2: $G = S(G')$. Then $k = k' - \frac{f}{2^r}$. Similarly, from ($p2$) and ($p6$) of Lemma \ref{lem:cls}, we have the same equations. 
 \end{proof}
 
\end{document}